\journal{arXiv}
\def\T{{\mathds{T}}}
\def\C{{\mathscr{C}}}
\def\.{{\cdot}}
\def\*{{\ast}}
\def\a{{(a_i)}}
\def\b{{(b_i)}}
\def\Null{{\textit{null}}}
\def\p4{{$P_4$}}
\def\equi {{\sim}}
\def\nequi{{\not \sim}}
\newcommand{\piso}[1]{\lfloor #1 \rfloor}
\newcommand*{\algrule}[1][\algorithmicindent]{\makebox[#1][l]{\hspace*{.5em}\vrule height .75\baselineskip depth .25\baselineskip}}%
\def\ALG@printindent{%
    \ifnum \theALG@nested>0
        \ifx\ALG@text\ALG@x@notext
            \addvspace{-3pt}
        \else
            \unskip
            \ALG@printindent@tempcnta=1
            \loop
                \algrule[\csname ALG@ind@\the\ALG@printindent@tempcnta\endcsname]%
                \advance \ALG@printindent@tempcnta 1
            \ifnum \ALG@printindent@tempcnta<\numexpr\theALG@nested+1\relax
            \repeat
        \fi
    \fi
    }%
\patchcmd{\ALG@doentity}{\noindent\hskip\ALG@tlm}{\ALG@printindent}{}{\errmessage{failed to patch}}
\algnewcommand\algorithmicto{\textbf{to}}
\newtheorem{theorem}{Theorem}
\newtheorem{lem}[theorem]{Lemma}
\newtheorem{propo}[theorem]{Proposition}
\newtheorem{coro}[theorem]{Corollary}
\newtheorem{defi}{Definition}
\newtheorem{ex}{Example}
\newtheorem{fact}{Fact}
\begin{document}

\begin{frontmatter}



\title{Cograph generation with linear delay}

\author{Átila A. Jones}
\address{Instituto de Computação, Universidade Federal Fluminense. Niterói, Brazil}
\ead{atilajones@id.uff.br}

\author{Fábio Protti}
\address{Instituto de Computação, Universidade Federal Fluminense. Niterói, Brazil}
\ead{fabio@ic.uff.br }

\author{Renata R. Del-Vecchio}
\address{Instituto de Matemática, Universidade Federal Fluminense. Niterói, Brazil}
\ead{renata@vm.uff.br}

\begin{abstract}
Cographs have always been a research target in areas such as coloring, graph decomposition, and  spectral theory. In this work, we present an algorithm to generate all unlabeled cographs with $n$ vertices, based on the generation of cotrees. The delay of our algorithm (time spent between two consecutive outputs) is $O(n)$. The time needed to generate the first output is also $O(n)$, which gives an overall $O(n\,M_n)$ time complexity, where $M_n$ is the number of unlabeled cographs with $n$ vertices. The algorithm avoids the generation of duplicates (isomorphic outputs) and produces, as a by-product, a linear ordering of unlabeled cographs wih $n$ vertices.
\end{abstract}

\begin{keyword}
Cographs \sep Enumarative Combinatorics \sep Enumerative Algorithms

\end{keyword}

\end{frontmatter}



\section{Introduction}

Cographs have been defined independently by several authors since the 1970's and are usually defined as $P_4$-free graphs, as proved in \citep{corneil1981complement}. The original definition of cograph is based on a recursive construction described as follows: any single vertex graph is a cograph; if $G$ is a cograph, so is its complement graph $\overline{G}$; if $G$ and $H$ are cographs, so is their disjoint union. The \textit{disjoint union} $G=G_1\cup\,G_2$ of graphs $G_1$ and $G_2$ is a graph operation such that $V(G)=V_1 \cup V_2$ and $E(G)=E_1\cup E_2$.

Cographs can be defined alternatively in the following way. The \textit{join} $G=G_1+G_2$ of $G_1$ and $G_2$ is an operation such that $V(G)=V_1 \cup V_2$ and $E(G)=E_1\cup E_2 \cup \{xy\mid x\in E_1 \hbox{ and } y \in E_2\}$. Note that $\overline{G_1\cup G_2} =\overline{G_1}+\overline{G_2}$. One can obtain a structural decomposition of a cograph by means of the operations \textit{join} and \textit{disjoint union}, as follows. In \citep{corneil1984cographs} a special tree, called \textit{cotree}, is used to represent a cograph $G$, in which leaf nodes are associated with the vertices of $G$, and each internal node is labeled 0 ({\it type-0 node}) or 1 ({\it type-1 node}) indicating operations of join or disjoint union on their children, respectively. Furthermore, every cotree must be such that the nodes in a root-leaf path have alternate labels, which ensures that each cograph is associated with only one cotree (up to permutation of siblings). On the other hand, each cotree refers to a single cograph.

As cographs can be identified by cotrees, given a rooted tree $T$ we use the following notation. The root of $T$ is denoted by $root(T)$. Given a node $v$ (other than the root), we denote by $P_v$ the only path from $v$ to the root. The nodes of $P_v$ (except $v$) are named \textit{ancestors} of $v$ in $T$. The immediate ancestor of $v$ in $P_v$ is called the \textit{parent} of $v$ and denoted by $parent_T(v)$. We set $parent(root(T))=\Null$. If $v$ is an internal node, its \textit{children} are the elements of the set $children_T(v)=\{u \in V(T)\mid parent_T(u)= v\}$. If $v\neq root(T)$, its \textit{siblinghood} is the set $ S_T(v) = children_T(parent_T(v))$. Define also $S_T(root(T)) = \{root(T)\}$. We denote by $T(v)$ the subtree of $T$ rooted at $v$ containing all the nodes $w$ for which $v$ is an ancestor of $w$. When there is no ambiguity we omit the index $T$ from the above definitions. If two graphs $G_1$ and $G_2$ are isomorphic we write $G_1\equiv G_2$.

In \citep{corneil1981complement} the authors prove that given two leaves $v$ and $w$ of a cotree, the closest ancestor to both (that is common to $P_v$ and $P_w$) is type-1 if and only if $v$ and $w$ are adjacent in the associated cograph. Hence a cograph is connected if and only if the root of the cotree is type-1.

In the context of enumerative combinatorics, the work \citep{ravelomanana2001asymptotic} gives an asymptotic approximation for the number of cographs with a given number of vertices. Another approach consists of the generation of {\em all} graphs with a given property (e.g., ``cographs with $n$ vertices''). Observe that this kind of approach provides, as a by-product, the exact number of graphs satisfying the given property. Formally, a {\em generation} (or {\em enumeration}) algorithm $A$ generates the sequence $G_1,G_2,\ldots,G_M$ of all graphs satisfying a required property $\pi$, where $G_i\not\equiv G_j$ for $1 \leq i < j \leq M$. There are some alternative definitions of efficiency for generation algorithms~\cite{enum}, the weakest one being {\em polynomial total time}, i.e., the total running time is polynomial in the combined size of $G_1,\ldots,G_M$. The strongest definition is {\em polynomial delay}, which means that the time between the generation of two consecutive elements is polynomial only in the size of the next output element. Some important examples of enumeration algorithms with polynomial delay are the generation of all minimum spanning trees~\cite{epps}, all maximal independent sets~\cite{indsets}, and all cycles of a graph~\cite{jayme}. In~\cite{courcelle}, the author describe sufficient conditions that guarantee linear delay generation of certain structures of a graph $G$. More precisely, the author proves the following theorem: if $\mathscr{C}$ is a family of graphs of bounded tree-width then for every monadic second-order formula $\varphi(X_1,\ldots,X_p)$, where $X_1,\ldots,X_p$ are the free set variables in $\varphi$, there is an algorithm that takes as input an $n$-vertex graph $G$ in $\mathscr{C}$ and generates with linear delay all the $p$-tuples of sets that satisfy $\varphi$ in $G$, after an $O(n \log n)$ preprocessing step. The formula $\varphi$ can be interpreted as a query on $G$ with parameters $X_1,\ldots,X_p$. For instance, the result of the query $\varphi=\varphi_1\wedge\varphi_2\wedge\varphi_3$, where $\varphi_1= \forall u \, \forall v \, (X_1(u)\wedge X_1(v) \rightarrow \neg\,adj(u,v))$, $\varphi_2=\forall u \, \forall v \, (X_2(u)\wedge X_2(v) \rightarrow \neg\,adj(u,v))$, and $\varphi_3= \neg\,\exists u\,(X_1(u)\wedge X_2(u))$, is the collection of all pairs $(X_1,X_2)$ such that $X_1$ and $X_2$ are disjoint independent sets of $G$. (In other words, the result of the query can be viewed as the collection of all induced bipartite subgraphs of $G$.) 

Motivated by the above discussion, in this work we develop an algorithm that generates all (mutually nonisomorphic) unlabeled cographs with $n$ vertices. Say that two cotrees are {\em isomorphic} if they are associated with the same cograph. Since a cotree is a compact representation of the associated cograph\footnote{Note that a cotree provides an $O(n)$ size representation of a cograph, because adjacency relations are implicit from the structure of cotrees.}, our algorithm in fact generates all mutually non-isomorphic cotrees with $n$ leaves, for a given positive integer $n$. The delay of our algorithm (time spent between two consecutive outputs) is $O(n)$. The time needed to generate the first output is also $O(n)$, which gives an overall $O(n\,M_n)$ time complexity, where $M_n$ is the number of unlabeled cographs with $n$ vertices. The algorithm avoids the generation of duplicates (isomorphic outputs) and produces, as a by-product, a linear ordering of unlabeled cographs with $n$ vertices. To the best of the authors' knowledge, this is the first explicit description of a linear-delay cograph generation algorithm. Since each output element is associated with a cograph with exactly $n$ vertices, the $O(n)$ delay attained by our algorithm is best possible in practice.

The remainder of this work is organized as follows. In Sections 2 and 3, we define and study special orderings of nodes and trees using concepts of integer partitions. In Section 4, we describe our cograph generation algorithm and discuss its complexity and correctness. Section 5 contains our conclusions.

\section{Orderings of Nodes and Trees} \label{sec:node-ordering}

As discussed above, any path in a cotree alternates the types of its nodes; thus they are only determined by the type of the root. Let $\mathds{T}$ be the set of rooted trees where each internal node has at least two children. Then each element of $\T$ refers to exactly two distinct cographs: one with a type-1 root and another with a type-0 root, unless the tree consists only of the root (and, in this case, represents the trivial graph). On the other hand, each cograph is associated with a single tree of $\T$, up to permutation of nodes in the same siblinghood. In the next paragraphs we introduce a standard way to configure such a tree.

For $T\in\T$ and each node $v$ of $T$ we define $l(v)$ as the number of leaves of $T(v)$. If $v$ is a leaf, $l(v)=1$. A tree is said \textit{labeled} if each node $v$ in $T$ is labeled $l(v)$. Figure~\ref{fig:ordered-tree} shows an example of labeled tree.

\begin{sidewaysfigure}[htbp]
\centering
\includegraphics[scale=0.98]{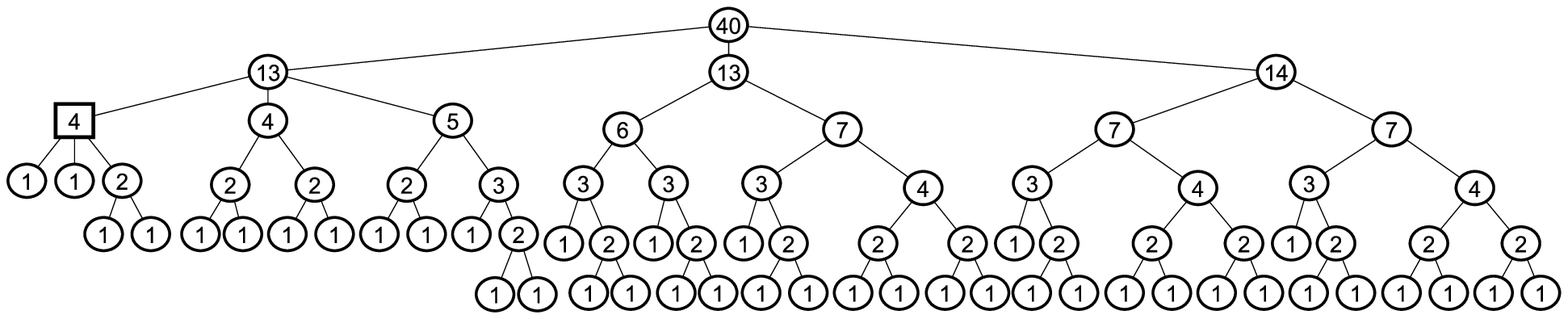}
\caption{Labeled tree.\label{fig:ordered-tree}}
\end{sidewaysfigure}

\begin{fact} \label{fact:sum-of-children}
If $v$ is an internal node then $l(v)=\sum_{w\in\mathit{children}(v)} l(w)$.
\end{fact}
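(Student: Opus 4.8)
The statement to prove is Fact~\ref{fact:sum-of-children}: for an internal node $v$ of a tree $T\in\T$, we have $l(v)=\sum_{w\in\mathit{children}(v)} l(w)$, where $l(x)$ denotes the number of leaves of the subtree $T(x)$.

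The plan is to argue directly from the definition of $l$ and the structure of rooted trees, by analyzing how the leaves of $T(v)$ are distributed among the subtrees rooted at the children of $v$. First I would recall that $T(v)$ consists of $v$ together with the union of the subtrees $T(w)$ for $w\in\mathit{children}(v)$, and that these subtrees are pairwise disjoint (since each node of a rooted tree has a unique parent, no node can lie in two distinct $T(w)$'s). Second, since $v$ is an internal node it has at least two children and is therefore not itself a leaf, so every leaf of $T(v)$ lies in $T(v)\setminus\{v\}=\bigcup_{w\in\mathit{children}(v)} T(w)$; moreover a leaf of $T(v)$ that belongs to $T(w)$ is also a leaf of $T(w)$ (being a leaf is a local property: it means having no children in $T$, which does not depend on which subtree we view the node in). Conversely, any leaf of any $T(w)$ is a leaf of $T(v)$. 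Hence the set of leaves of $T(v)$ is the disjoint union over $w\in\mathit{children}(v)$ of the sets of leaves of $T(w)$, and taking cardinalities gives $l(v)=\sum_{w\in\mathit{children}(v)} l(w)$.

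Alternatively, one could phrase this as a one-line induction on the height of $T(v)$, but the set-partition argument above is cleaner and essentially immediate. The only point requiring a small amount of care is the edge condition: one must use that $v$ is internal to conclude $v$ itself is not counted among the leaves of $T(v)$ (if $v$ were a leaf, the left side would be $1$ while the right side would be an empty sum equal to $0$); since the hypothesis explicitly assumes $v$ is internal, this is not an obstacle but merely a remark worth stating.

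I do not expect any genuine difficulty here — the result is a structural triviality about rooted trees, and the ``hard part'', such as it is, amounts to spelling out carefully that the subtrees $T(w)$ partition $T(v)\setminus\{v\}$ and that leafhood is preserved when passing between $T(v)$ and $T(w)$. The whole proof should fit in a few sentences.
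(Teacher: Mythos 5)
Your argument is correct: the subtrees rooted at the children of $v$ partition $T(v)\setminus\{v\}$, leafhood transfers between $T(v)$ and the $T(w)$'s, and $v$ itself is excluded because it is internal, which gives the sum formula. The paper states this as a Fact with no proof at all, treating it as immediate, and your write-up is simply the standard justification of that same observation, so there is nothing to reconcile.
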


Let $\T_n=\{T \in\mathds{T}\mid l(root(T))=n\}$, and consider the partition $\T=\bigcup_{n=1}^{\infty} \T_n$. Notice that each cograph with $n$ vertices can be represented by a tree in $\T_n$. The fact below is straightforward and will be useful to calculate the complexities of the algorithms.

\begin{fact} \label{fact:number-of-internal-nodes}
For each $n$ the number of nodes of a tree in $\T_n$ is at most $2n-1$.
\end{fact}

Facts~\ref{fact:sum-of-children} and~\ref{fact:number-of-internal-nodes} guarantee that labeling a tree can be done in $O(n)$.

The notion of \textit{integer partition} was introduced by Euler. In this work we assume that a partition must have at least two components, as described below.

\begin{defi}\label{def:particao}
Let $n\geq2$ be a positive integer. The {\em integer partition} of $n$ is a non-decreasing sequence of positive integers $\a_k:=(a_1, a_2, a_3, \ldots, a_k)$ such that $ \sum_ {i = 1}^{k}a_i=n$ and $ k\geq2$. The set of all partitions of $n$ is denoted by $Part(n)$.
\end{defi}

We will use a \textit{lexicographic ordering} of partitions.

\begin{defi} \label{def:ordenacaoParticao}
Given distinct elements $a, b \in Part(n)$, where $a=\a_k$ and $b =\b_m$, let $j=\min\{i\mid a_i \neq b_i\}$, where $1\leq i\leq\min\{k,m\}$. If $a_j<b_j$ then $a<b$, otherwise $b<a$.
\end{defi}

As Definition \ref{def:ordenacaoParticao} is based on comparison of integers, the law of trichotomy applies to partitions:

\begin{fact} \label{fact:trichotomy-partitions}
If $a,b \in Part(n)$ then exactly one of the following holds: $a<b$, $a=b$, or $b<a$.
\end{fact}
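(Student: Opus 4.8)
The plan is to reduce the statement to the law of trichotomy for integers, since Definition~\ref{def:ordenacaoParticao} defines the relation $<$ on partitions purely in terms of comparing a single pair of integers $a_j$ and $b_j$. First I would split into two cases according to whether $a$ and $b$ are equal as sequences. If $a=b$, then neither $a<b$ nor $b<a$ can hold, because Definition~\ref{def:ordenacaoParticao} only introduces $<$ between \emph{distinct} elements of $Part(n)$; hence exactly one of the three alternatives holds.

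The remaining case is $a\neq b$, with $a=\a_k$ and $b=\b_m$. The key preliminary step is to check that the index set $D=\{\,i\mid 1\leq i\leq\min\{k,m\},\ a_i\neq b_i\,\}$ is nonempty, so that $j=\min D$ appearing in Definition~\ref{def:ordenacaoParticao} is well defined. This is the one place where it matters that $a$ and $b$ are partitions of the \emph{same} integer: if $D$ were empty then, assuming without loss of generality $k\leq m$, we would have $a_i=b_i$ for $1\leq i\leq k$, and therefore $\sum_{i=k+1}^{m}b_i=n-n=0$; since each $b_i\geq 1$ by Definition~\ref{def:particao}, this forces $m=k$ and hence $a=b$, contradicting $a\neq b$. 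In other words, no partition of $n$ can be a proper prefix of another.

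With $j=\min D$ fixed and $a_j\neq b_j$, the law of trichotomy for integers gives that exactly one of $a_j<b_j$ and $b_j<a_j$ holds; by Definition~\ref{def:ordenacaoParticao} these translate respectively into $a<b$ and $b<a$, and $a=b$ is excluded since we are in the case $a\neq b$. Combining the two cases yields the statement. The argument is routine; the only step requiring any care is the prefix observation in the second paragraph, which is precisely where the hypothesis $a,b\in Part(n)$ (as opposed to arbitrary non-decreasing integer sequences) is used.
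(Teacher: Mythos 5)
Your argument is correct and follows the same route the paper takes implicitly: the paper states this as a Fact justified only by the remark that Definition~\ref{def:ordenacaoParticao} reduces to comparison of integers, so trichotomy is inherited. Your additional observation that the index $j$ is well defined because no partition of $n$ can be a proper prefix of another (all parts being positive and the sums both equal to $n$) is exactly the one detail the paper leaves unstated, and you verify it correctly.
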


\begin{ex}
The elements of $Part(5)$ listed in increasing order are:
$$(1,1,1,1,1), (1,1,1,2), (1,1,3), (1,2,2), (1,4), (2,3).$$
\end{ex}

\begin{fact} \label{fact:min-and-max-partitions}
The minimum element of $Part(n)$ is $a=(1,\ldots,1)$, and the maximum is $b=\left(\lfloor\frac{n}{2}\rfloor,\lceil\frac{n}{2}\rceil \right)$.
\end{fact}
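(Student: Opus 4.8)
The plan is to verify both claims directly against Definitions~\ref{def:particao} and~\ref{def:ordenacaoParticao}, since Fact~\ref{fact:trichotomy-partitions} already guarantees that ``the minimum'' and ``the maximum'' are well-defined notions on $Part(n)$.

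First I would handle the minimum. Let $a=(1,1,\ldots,1)$ (with $n$ components), which is a legitimate element of $Part(n)$ since $n\geq 2$ forces $k=n\geq 2$ and the sum is $n$. Take any $b=\b_m\in Part(n)$ with $b\neq a$. Since the components of $b$ are positive integers summing to $n$ and $b$ has fewer than $n$ components (otherwise all components equal $1$ and $b=a$), some component of $b$ exceeds $1$; let $j$ be the first index at which $a_j\neq b_j$. Then $a_i=b_i=1$ for $i<j$, so the first $j-1$ components of $b$ are all $1$, hence $b_j\geq 2 > 1 = a_j$ (using that $b_j\neq a_j=1$ and $b_j$ is a positive integer). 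By Definition~\ref{def:ordenacaoParticao}, $a<b$. This shows $a$ is the minimum.

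Next I would handle the maximum. Let $b=(\lfloor n/2\rfloor,\lceil n/2\rceil)$; this lies in $Part(n)$ because $\lfloor n/2\rfloor+\lceil n/2\rceil=n$, it is non-decreasing, both components are positive (as $n\geq 2$), and it has exactly $2\geq 2$ components. Take any $a=\a_k\in Part(n)$ with $a\neq b$, and let $j=\min\{i\mid a_i\neq b_i\}$. The key observation is that $a_1\leq\lfloor n/2\rfloor=b_1$: indeed, since $a$ has at least two components and they are non-decreasing positive integers summing to $n$, we have $a_1\leq a_2\leq\cdots\leq a_k$, so $2a_1\leq a_1+a_k\leq\sum a_i=n$, giving $a_1\leq n/2$ and hence $a_1\leq\lfloor n/2\rfloor$. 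Now split into cases on $j$. If $j=1$, then $a_1\neq b_1$ combined with $a_1\leq b_1$ forces $a_1<b_1$, so $a<b$. If $j\geq 2$, then $a_1=b_1=\lfloor n/2\rfloor$; since $b$ has only two components and $j\leq\min\{k,2\}$, this forces $j=2$, and then $a_2\neq b_2=\lceil n/2\rceil=n-\lfloor n/2\rfloor=n-a_1$. But $a_2\leq\sum_{i\geq 2}a_i=n-a_1$, so $a_2\neq n-a_1$ combined with $a_2\leq n-a_1$ gives $a_2<b_2$, hence $a<b$. In every case $a<b$, so $b$ is the maximum.

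The argument is entirely routine; the only point requiring a little care is the case analysis for the maximum, where one must rule out $j>2$ (impossible since $\min\{k,m\}=\min\{k,2\}\leq 2$) and must use the bound $a_1\leq\lfloor n/2\rfloor$ together with $a_2\leq n-a_1$ to force strict inequality at the relevant coordinate. I would state these two inequalities explicitly as the crux of the proof and let the rest follow mechanically from Definition~\ref{def:ordenacaoParticao}.
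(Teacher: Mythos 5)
Your proof is correct. The paper states this as a bare \emph{Fact} and supplies no proof of its own, so there is nothing to compare against; your direct verification from Definitions~\ref{def:particao} and~\ref{def:ordenacaoParticao} --- using $2a_1\leq a_1+a_2\leq n$ to get $a_1\leq\lfloor n/2\rfloor$, and $a_2\leq n-a_1$ to force strictness at the second coordinate --- is exactly the routine argument the authors are implicitly relying on, and your handling of the edge cases (ruling out $j>2$, and checking that a differing index exists at all) is sound.
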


It is useful for our purposes to relate partitions of integers with distribution of siblinghoods in a tree.

\begin{defi}
Let $T\in\T$, and assume that for every internal node $v$ of $T$ its children $v_1,\ldots,v_k$ are such that $l(v_1)\leq l(v_2)\leq\cdots\leq l(v_k)$. We say that $(l(v_1),\ldots,l(v_k))\in Part(l(v))$ is the {\em partition induced by} $v$ {\em in} $T$.
\end{defi}

\begin{defi} \label{def:nodes-inequality}
Let $T \in \T$, and let $v,w$ two nodes of $T$. Define an order relation on the nodes of $T$ recursively as follows:
\begin{itemize}
\setlength\itemsep{-1.0pt}
\item[] Case 1: If $l(v)<l(w)$ then $v<w$;
\item[] Case 2: If $l(v)=l(w)=1$ then $v\equi w$;
\item[] Case 3: If $l(v)=l(w)$, $l(v)\neq 1:$
\begin{itemize}
\item[] Let $children(v)=\{v_1,\ldots,v_k\}$, and let $\a_k$ be the partition induced by $v$ in $T$. Analogously, let $children(w)=\{w_1,\ldots,w_m\}$, and let $\b_m$ be the partition induced by $w$ in $T$. Consider the following subcases:
\item[] Case 3.1: If $\a_k < \b_m$ then $v<w$;
\item[] Case 3.2: If $\a_k = \b_m$ $($and, consequently, $k=m$ $):$
\begin{itemize}
\item[] Case 3.2.a: If $v_i\equi w_i$ for all $1\leq i \leq k$ then $v\equi w$;
\item[] Case 3.2.b: Otherwise, let $j=min\{\,i\mid v_i\nequi w_i\,\}$. If $v_j<w_j$\\
\hspace*{2cm} then $v<w$ else $w<v$.
\end{itemize}
\end{itemize}
\end{itemize}
\end{defi}

If $v\equi w$ we will say that $v$ and $w$ are {\em equivalent}, which allows us to write the equivalence class $[v]=\{u \in V(T)\mid u\equi v\}$. Also, we use the following notation: (a) $v\leq w$ if and only if $v\equi w$ or $v<w$; (b) $v>w$ if and only if $v\not\leq w$.


It is worth noting that Definition~\ref{def:nodes-inequality} can be easily extended to nodes $v$ and $w$ in different trees $T_1$ and $T_2$ (simply consider an auxiliary tree with a new root $r$ with subtrees $T_1(v)$ and $T_2(w)$).

From Fact~\ref{fact:trichotomy-partitions}, we have:

\begin{fact} \em{(Node Trichotomy)} \label{fact:trichotomy-nodes}
If $v,w$ are nodes of a tree $T\in\T$ then exactly one of the following holds: $v<w$, $v\equi w$, or $v>w$.
\end{fact}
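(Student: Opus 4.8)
The goal is to prove the Node Trichotomy fact: for nodes $v,w$ of a tree $T\in\T$, exactly one of $v<w$, $v\equi w$, $v>w$ holds. The plan is a straightforward structural induction on $\max\{l(v),l(w)\}$, following the case split of Definition~\ref{def:nodes-inequality}. Recall that $v>w$ is \emph{defined} as $v\not\leq w$, i.e.\ as the negation of ($v\equi w$ or $v<w$); so the content of the fact is really that exactly one of $v<w$ and $v\equi w$ holds — equivalently, these two relations are mutually exclusive and at least one of ($v<w$, $v\equi w$, $w<v$) holds, with $w<v$ being precisely $v>w$ once exclusivity is known. I would phrase the induction hypothesis as: for all nodes $v',w'$ with $\max\{l(v'),l(w')\}<N$, exactly one of $v'<w'$, $v'\equi w'$, $w'<v'$ holds.

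First I would handle the base and the easy cases. If $l(v)\neq l(w)$, say $l(v)<l(w)$, then Case~1 gives $v<w$, and neither $v\equi w$ nor $w<v$ can hold because every clause producing $v\equi w$ or $w<v$ requires $l(v)=l(w)$ (Cases 2, 3.1, 3.2 all presuppose equal leaf counts); so exactly one holds. If $l(v)=l(w)=1$, Case~2 gives $v\equi w$, and no other case applies, so again exactly one holds. The remaining situation is $l(v)=l(w)=N>1$, which is where the induction and the main work live.

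In the case $l(v)=l(w)=N>1$, let $\a_k$ and $\b_m$ be the partitions induced by $v$ and $w$. By Fact~\ref{fact:trichotomy-partitions} exactly one of $\a_k<\b_m$, $\a_k=\b_m$, $\b_m<\a_k$ holds. If $\a_k\ne\b_m$, Case~3.1 assigns $v<w$ or $w<v$ accordingly, and these are mutually exclusive (by trichotomy of partitions) and exclude $v\equi w$ (Case 3.2.a requires $\a_k=\b_m$); done. If $\a_k=\b_m$ then $k=m$ and $l(v_i)=l(w_i)$ for all $i$, with each $l(v_i)<N$ (since $k\ge 2$ forces every component of the partition to be strictly smaller than the sum $N$). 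So the induction hypothesis applies to each pair $(v_i,w_i)$: exactly one of $v_i<w_i$, $v_i\equi w_i$, $w_i<v_i$ holds. Either $v_i\equi w_i$ for all $i$, in which case Case~3.2.a gives $v\equi w$ and no other subcase fires (3.2.b needs some $v_i\nequi w_i$); or the set $\{i\mid v_i\nequi w_i\}$ is nonempty, with a well-defined minimum $j$, and by the induction hypothesis exactly one of $v_j<w_j$, $w_j<v_j$ holds, so Case~3.2.b assigns exactly one of $v<w$, $w<v$, and $v\equi w$ is excluded. That exhausts all cases.

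The one point requiring a little care — and the place where I'd be most careful in writing — is verifying \emph{exclusivity} rather than just \emph{exhaustiveness}: I must check that the case conditions in Definition~\ref{def:nodes-inequality} are genuinely disjoint (e.g.\ that no input simultaneously satisfies the hypotheses of Case~3.1 and Case~3.2.a), so that the "at most one" half of trichotomy is immediate from the definition being a well-defined function of the inputs, and then "at least one" follows from the trichotomy of partitions (Fact~\ref{fact:trichotomy-partitions}) plus the induction hypothesis at each recursion step. A secondary subtlety is confirming the recursion is well-founded: this is exactly where $k\geq 2$ in Definition~\ref{def:particao} is used, guaranteeing $l(v_i)<l(v)$ so that $\max\{l(v_i),l(w_i)\}<\max\{l(v),l(w)\}$ strictly decreases. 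No other step is more than routine bookkeeping.
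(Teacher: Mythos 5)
Your proof is correct and follows essentially the same route the paper intends: the paper states this fact as an immediate consequence of the trichotomy of partitions (Fact~\ref{fact:trichotomy-partitions}), and your structural induction on $\max\{l(v),l(w)\}$, tracking the disjointness and exhaustiveness of the cases in Definition~\ref{def:nodes-inequality}, is just the careful write-up of that one-line argument. Nothing is missing; the observations about $v>w$ being defined as $v\not\leq w$ and about $k\geq 2$ guaranteeing well-foundedness are exactly the right points to make explicit.
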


Throughout the text, if we use the symbol of equality between nodes we are referring to the same node.

For two trees $T_1,T_2\in\T$, write $T_1\equiv T_2$ if they can be converted into identical trees by sibling permutations. Definition \ref{def:nodes-inequality} gives the intuitive idea that two nodes $v$ and $w$ are equivalent when $T(v)\equiv T(w)$. This is guaranteed by the following lemma:

\begin{lem} \label{lem:equality-isomorphism}
If $v,w$ are nodes of $T\in\T_n$ then $T(v)\equiv T(w)$ if and only if $v\equi w$.
\end{lem}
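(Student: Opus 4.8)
The statement is an if-and-only-if about two nodes $v,w$ of a tree $T\in\T_n$, asserting that $T(v)\equiv T(w)$ (the subtrees are convertible into each other by sibling permutations) precisely when $v\equi w$ in the sense of Definition~\ref{def:nodes-inequality}. The natural strategy is induction on $l(v)$ (equivalently, on the size of $T(v)$), since both the equivalence relation $\equi$ and the relation $\equiv$ on trees are defined recursively through children. I would first record the base case: if $l(v)=1$ then $v$ is a leaf, and by Fact~\ref{fact:sum-of-children} (or directly) $T(w)$ consists of a single leaf if and only if $l(w)=1$; by Case~2 of Definition~\ref{def:nodes-inequality} this is exactly the condition $v\equi w$, and two single-leaf trees are trivially $\equiv$. (If $l(v)\ne l(w)$, then by Case~1 we have $v\not\equi w$, and clearly $T(v)\not\equiv T(w)$ since they have different numbers of leaves; so from the outset we may assume $l(v)=l(w)$, and in the inductive step assume $l(v)=l(w)=\ell>1$.)

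For the inductive step, assume the lemma holds for all nodes with fewer than $\ell$ leaves in their subtree, and let $v,w$ have $l(v)=l(w)=\ell>1$. Let $v_1,\dots,v_k$ and $w_1,\dots,w_m$ be the children of $v$ and $w$, indexed so that their $l$-values are non-decreasing, giving the induced partitions $\a_k$ and $\b_m$. For the forward direction, suppose $T(v)\equiv T(w)$: a sibling-permutation isomorphism restricts, at the top level, to a bijection between $\{v_1,\dots,v_k\}$ and $\{w_1,\dots,w_m\}$ carrying each $v_i$ to a $w_{\sigma(i)}$ with $T(v_i)\equiv T(w_{\sigma(i)})$; hence $k=m$, and by the induction hypothesis $v_i\equi w_{\sigma(i)}$ for each $i$. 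Since equivalent nodes have equal $l$-values, $\sigma$ preserves the $l$-value of each child, so the sorted sequences coincide, i.e. $\a_k=\b_m$; moreover one checks that $\sigma$ can be chosen to fix the sorted order on each block of children with a common $l$-value, whence $v_i\equi w_i$ for all $i$, and Case~3.2.a gives $v\equi w$. Conversely, if $v\equi w$, then by the trichotomy (Fact~\ref{fact:trichotomy-nodes}) and the structure of Definition~\ref{def:nodes-inequality} we must be in Case~3.2.a: $k=m$, $\a_k=\b_m$, and $v_i\equi w_i$ for all $i$; by the induction hypothesis $T(v_i)\equiv T(w_i)$ for each $i$, and gluing these isomorphisms under the respective roots (both roots being unlabeled internal nodes, or rather carrying the same induced structure) shows $T(v)\equiv T(w)$.

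The main obstacle — and the step I would write out most carefully — is the bookkeeping around ties in $l$-values among the children. When several children of $v$ share the same $l$-value, the isomorphism $\sigma$ obtained from $T(v)\equiv T(w)$ need not respect the chosen sorted indexing within that block; the claim ``$v_i\equi w_i$ for all $i$'' requires reindexing within each equal-$l$ block and invoking that $\equi$ is an equivalence relation (reflexive, symmetric, transitive) so that permuting equivalent siblings does not disturb the conclusion. I would state explicitly that $\equi$ is an equivalence relation (this follows from Definition~\ref{def:nodes-inequality} by a short induction, or may already be implicit in the paper's use of equivalence classes $[v]$) and then argue: within each block of children with common $l$-value, the induction hypothesis plus a counting argument shows the multiset of equivalence classes among $v$'s children in that block equals that among $w$'s children, so after a within-block permutation we get $v_i\equi w_i$ for every $i$ in sorted order; this is compatible with the sorted indexing because permuting within a block of equal $l$-values keeps the sequence non-decreasing. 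The other direction is comparatively routine: it is just the recursive assembly of child-isomorphisms, the only subtlety being to observe that the ordering conventions on children are consistent on both sides, which is immediate once $\a_k=\b_m$ and $v_i\equi w_i$ for all $i$.
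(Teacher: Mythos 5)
Your proposal is correct and follows essentially the same route as the paper: induction on the number of leaves $p=l(v)$ of the subtree, with leaves as the base case, matching children recursively and concluding via Case~3.2.a of Definition~\ref{def:nodes-inequality}. The only difference is that you spell out the reindexing of children within blocks of equal $l$-value (using that $\equi$ is an equivalence relation), a point the paper compresses into the phrase ``we can arrange siblings so that $T(v_i)\equiv T(w_i)$''; your extra care there is warranted but does not change the argument.
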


\begin{proof}
The proof is trivial when $v$ is a leaf. Suppose $v$ is an internal node of $T$ and let $\a_k$ be the partition induced by $v$ in $T$, where $children(v)=\{v_1,\ldots,v_k\}$ and $a_i=l(v_i)$ for $i\in \{1,\ldots,k\}$. Analogously, let $\b_m$ be the partition induced by $w$ in $T$, where $children(w)=\{w_1,\ldots,w_m\}$.

Suppose $T(v)\equiv T(w)$. Then $\a_k=\b_k$ and $\sum a_i =\sum b_i=l(v)=l(w)=p$, for some integer $p\geq 2$. We use induction to prove that $v\equi w$ holds for every $p\geq 2$.

The base case $p=2$ easily follows, since $Part(2)=\{(1,1)\}$. Now assume that $v\equi w$ holds for every $q<p$. Since $T(v)\equiv T(w)$, we can arrange siblings so that $T(v_i)\equiv T(w_i)$ and $l(v_i)=l(w_i)<p$ for $i\in\{1,\ldots,k\}$. Thus, by the induction hypothesis, $v_i\equi w_i$ holds, which implies $v\equi w$ by Case 3.2.a in Definition~\ref{def:nodes-inequality}.

Conversely, suppose that $v\equi w$. Then $l(v)=l(w)=p$ for some $p$. We use induction on $p$ again to prove that $T(v)\equiv T(w)$. The result is valid for the base case $p =2$. Suppose now that it is valid for every $q<p$. Using node trichotomy, we have that $v_i\equi w_i$ for $i\in\{1,\ldots,k=m\}$, with $l(v_i)=l(w_i)<p$. Thus, by the induction hypothesis, $T(v_i)\equiv T(w_i)$, i.e., $T(v)\equiv T(w)$.
\end{proof}

\begin{defi} \label{def:ordered-tree}
Let $T\in\T$, and let $v$ be a node of $T$. The siblinghood $children_T(v)=\{v_1,\ldots,v_k\}$ is said to be {\em ordered} if $v_1\leq\cdots\leq v_k$. If all the siblinghoods of a tree are ordered, we say that the tree is ordered.
\end{defi}

The tree in Figure~\ref{fig:ordered-tree} is ordered. Ordered trees establish a standard way to represent cotrees, as shown in the following proposition:

\begin{propo} \label{propo:cographs-and-trees}
Each cograph is associated with a single ordered tree in $\T$.
\end{propo}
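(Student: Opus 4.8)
The plan is to establish the proposition in two halves: first that every cograph admits \emph{some} ordered tree in $\T$, and second that this ordered tree is \emph{unique}.

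For existence, I would start from the known fact (recalled in Section~1) that each cograph $G$ is associated with a unique cotree up to permutation of siblings, and hence — by collapsing the type-0/type-1 labels, which by the alternation property are determined by the type of the root — with a unique element $T\in\T$ up to sibling permutations. Given such a $T$, I would produce an ordered representative by a bottom-up procedure: process the nodes in non-increasing order of depth (leaves first), and at each internal node $v$ permute its children $v_1,\dots,v_k$ so that $v_1\leq\cdots\leq v_k$ according to Definition~\ref{def:nodes-inequality}. Node trichotomy (Fact~\ref{fact:trichotomy-nodes}) guarantees that the relation $\leq$ totally orders any siblinghood (up to the equivalence $\equi$), so such a permutation exists; and reordering the children of $v$ does not change $T(w)$ for any $w$ not a strict ancestor of $v$, so earlier-fixed siblinghoods stay ordered. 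After all nodes are processed, every siblinghood is ordered, so $T$ has been converted by sibling permutations into an ordered tree associated with $G$.

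For uniqueness, suppose $T_1$ and $T_2$ are both ordered trees in $\T$ associated with $G$. Then $T_1\equiv T_2$ (they are converted into identical trees by sibling permutations), so in particular $\operatorname{root}(T_1)\equi\operatorname{root}(T_2)$ by Lemma~\ref{lem:equality-isomorphism}. I would then prove, by induction on $l(v)$, the claim: if $v$ is a node of $T_1$ and $w$ a node of $T_2$ with $v\equi w$ and both siblinghoods $children(v)$, $children(w)$ ordered, then $T_1(v)$ and $T_2(w)$ are literally identical trees. The base case $l(v)=1$ is immediate. For the inductive step, $v\equi w$ forces (via Case~3.2) the induced partitions to coincide, so $k=m$ and $l(v_i)=l(w_i)$ for all $i$; since the children are listed in non-decreasing order and, within a block of equal $l$-values, in non-decreasing $\leq$-order, Case~3.2.a together with node trichotomy yields $v_i\equi w_i$ for each $i$ — here one must check that ordered-ness removes any ambiguity in matching up children with equal labels, which is exactly what Definition~\ref{def:ordered-tree} buys us. The induction hypothesis then gives $T_1(v_i)=T_2(w_i)$ for each $i$, and since the children are in the same order, $T_1(v)=T_2(w)$. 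Applying this to the roots gives $T_1=T_2$.

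The main obstacle is the uniqueness direction, specifically the step where equal subtree-sizes among siblings could a priori be matched in different ways: one has to argue that when two ordered siblinghoods induce the same partition, the non-decreasing $\leq$-ordering pins down a \emph{canonical} correspondence $v_i\leftrightarrow w_i$ under which $v_i\equi w_i$. This is where Fact~\ref{fact:trichotomy-nodes} and the total-preorder structure of $\leq$ on nodes do the real work; everything else is routine induction on $l(v)$ supported by Lemma~\ref{lem:equality-isomorphism}.
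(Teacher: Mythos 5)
Your proof is correct and follows essentially the same route as the paper: the paper likewise invokes the uniqueness of the cotree up to sibling permutations and then uses Lemma~\ref{lem:equality-isomorphism} to argue that two ordered representatives can differ only by permutations of equivalent siblings, hence are identical. Your version merely makes explicit (via the induction on $l(v)$ and the canonical matching of equal-label children) the step the paper compresses into the sentence ``this permutation generates identical subtrees with exactly the same arrangement of nodes,'' and adds an existence argument the paper leaves implicit.
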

\begin{proof}
Let $G$ be a cograph and let $T$ be its associated cotree. In~\citep{corneil1981complement} the authors prove that $T$ is unique up to isomorphism. Let $T'$ and $T''$ be ordered trees, both isomorphic to $T$. As the nodes of $T'$ and $T''$ have an order of arrangement, they are the same tree, up to permutation of {\em equivalent} nodes of the same siblinghood. But Lemma~\ref{lem:equality-isomorphism} ensures that this permutation generates identical subtrees with exactly the same arrangement of nodes. Therefore $T'$ and $T''$ are identical.
\end{proof}

Now we introduce a total ordering on trees of $\T_n$.

\begin{defi} \label{def:ordering-trees}
Let $T_1, T_2\in \T_n$. Define $T_1<T_2$ if $root(T_1)<root(T_2)$, and $T_1\equi T_2$ if $root(T_1)\equi root(T_2)$.
\end{defi}

\begin{fact} \label{fact:minimum-tree}
The minimum element of $\T_n$ is the tree $T_{\mathit{min}}$ whose root induces the partition $(1,\ldots,1)\in Part(n)$ (see Fact~\ref{fact:min-and-max-partitions}). This tree is associated with the graphs $K_n$ and $\overline{K_n}$.
\end{fact}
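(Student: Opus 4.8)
The plan is to unwind Definition~\ref{def:ordering-trees} and reduce the claim to a statement about $Part(n)$ that is already available as Fact~\ref{fact:min-and-max-partitions}. Comparing $T_{\mathit{min}}$ with an arbitrary $T\in\T_n$ amounts, by definition, to comparing $root(T_{\mathit{min}})$ with $root(T)$ under Definition~\ref{def:nodes-inequality}; since both roots have $l$-value $n$, we land in Case~3 (assuming $n\geq2$, which is implicit since $Part(n)$ is only defined for $n\geq2$; the case $n=1$ is degenerate, with $\T_1$ consisting of a single leaf), so everything hinges on the partitions induced by the two roots.

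First I would note that $T_{\mathit{min}}$ is well defined and unique. If a node $v$ satisfies $l(v)=1$ then $v$ must be a leaf: an internal node has at least two children, each contributing at least one leaf, so $l(v)\geq2$. Hence the only tree in $\T_n$ whose root induces $(1,\ldots,1)\in Part(n)$ is the star with $n$ leaf-children, and this is exactly $T_{\mathit{min}}$.

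Next, let $T\in\T_n$ be arbitrary and let $\b_m$ be the partition induced by $root(T)$. By Fact~\ref{fact:min-and-max-partitions}, $(1,\ldots,1)$ is the minimum of $Part(n)$, so by Fact~\ref{fact:trichotomy-partitions} either $\b_m=(1,\ldots,1)$ or $(1,\ldots,1)<\b_m$. In the first case $T=T_{\mathit{min}}$ by the uniqueness above, and all children of both roots are leaves, hence pairwise equivalent by Case~2 of Definition~\ref{def:nodes-inequality}; Case~3.2.a then gives $root(T_{\mathit{min}})\equi root(T)$, i.e.\ $T_{\mathit{min}}\equi T$. In the second case Case~3.1 gives $root(T_{\mathit{min}})<root(T)$, i.e.\ $T_{\mathit{min}}<T$. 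Either way $T_{\mathit{min}}\leq T$, so $T_{\mathit{min}}$ is the minimum of $\T_n$.

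Finally, to identify the associated cographs I would invoke the characterization from~\citep{corneil1981complement}: two leaves are adjacent exactly when their closest common ancestor is type-1. In $T_{\mathit{min}}$ the unique internal node is the root, which is the closest common ancestor of every pair of leaves, so a type-1 root makes all $\binom{n}{2}$ pairs adjacent (yielding $K_n$), while a type-0 root makes none of them adjacent (yielding $\overline{K_n}$); equivalently, $K_n$ is the join of $n$ single-vertex graphs and $\overline{K_n}$ their disjoint union. I do not expect a real obstacle here — once the tree ordering is traced back to the partition ordering the statement is immediate; the only points needing a line of care are the uniqueness remark (so that ``$T_{\mathit{min}}$'' names a single tree) and the trivial boundary case $n=1$.
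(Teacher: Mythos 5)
Your proposal is correct, and it follows exactly the route the paper intends: the paper states this as a \emph{Fact} without proof, and your argument supplies the natural justification by reducing the tree ordering (Definition~\ref{def:ordering-trees}) to the root comparison in Case~3 of Definition~\ref{def:nodes-inequality} and hence to the minimality of $(1,\ldots,1)$ in $Part(n)$ from Fact~\ref{fact:min-and-max-partitions}, together with the cotree adjacency characterization for the $K_n$/$\overline{K_n}$ identification. The uniqueness remark (a node with $l(v)=1$ must be a leaf, so the star is the only tree inducing $(1,\ldots,1)$ at the root) is a worthwhile detail the paper leaves implicit.
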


\section{Generating the next element}\label{sec:next-generation}

Proposition \ref{propo:cographs-and-trees} ensures that one can generate the unlabeled cographs with $n$ vertices by generating the elements of $\T_n$, whose first element is the minimum tree $T_{\mathit{min}}$ described in Fact~\ref{fact:minimum-tree}. Based on the total order established in Definition~\ref{def:ordering-trees}, a way to generate all the elements of $\T_n$ is to generate the tree $T_2$ which is {\em immediately next} to $T_{\mathit{min}}$, and then the tree $T_3$ which is immediately next to $T_2$, and so on, until the generation of the maximum tree in $\T_n$. The notion of immediately next element in a set equipped with a total order is formalized in the definition below.

\begin{defi}
Let $\mathbb{X}$ be a non-empty, finite, and totally ordered set. Given $a \in \mathbb{X}$, we say that $b$ is the element {\em immediately next} to $a$ in $\mathbb{X}$ if $b \in \mathbb{X}$, $b>a$, and for any $c\in \mathbb{X}$ such that $c>a$ it holds that $c\geq b$. If there no element immediately next to $a$ in $\mathbb{X}$ then $a$ is the maximum element of $\mathbb{X}$.
\end{defi}

We now apply the above definition to the set $Part(n)$. Given a partition $\a_k\in Part(n)$,  Algorithm~\ref{alg:immediately-next-partition} determines the partition immediately next to $\a_k$, if it exists, or decides that $\a_k$ is the maximum partition in  $Part(n)$ (see Fact~\ref{fact:min-and-max-partitions}).

\begin{algorithm}[htbp]
\caption{$\mathit{NextPartition}$}
 \label{alg:immediately-next-partition}
\begin{algorithmic}[1]
\Require $\a_k \in Part(n)$
\Ensure $\b_m$ immediately next to $\a_k$ in $Part(n)$, if it exists, or \Null
\Procedure{$\mathit{NextPartition}$}{$\a_k$}
  \State $n \gets \sum_{i=1}^k a_i$
  \If {$a_1 \neq \lfloor n/2\rfloor$}
     \If {$a_k-a_{k-1}\leq 1$}
     	\Return $(a_1,a_2,\ldots,a_{k-2},a_{k-1}+a_k)$;
     \Else \Comment $a_k-a_{k-1} > 1$
        \State $a_{k-1}\gets a_{k-1}+1$;
        \State $a_k \gets a_k -1$;
        \State $q\gets a_k \ \mathrm{div} \ a_{k-1}$; \ \ $r\gets a_k\mod a_{k-1}$;
        \State \textbf{if} $q>1$ \textbf{then} \Return $(a_1,\ldots,a_{k-2},\underbrace{a_{k-1},\ldots,a_{k-1}}_{q \hbox{ {\scriptsize times}}},(a_{k-1}+r))$;
        \State \textbf{else} \Return $(a_1,\ldots,a_{k-2},a_{k-1},a_k)$;
     \EndIf
   \Else
	 \If {$n\neq 3$} \Return \Null
	 \Else \Comment $n=3$ special case
	       \State \textbf{if} {$a_2=\lceil n/2\rceil$} \textbf{ then } \Return \Null ;
	       \State \textbf{else } \Return $(1,2)$;
	 \EndIf
   \EndIf
\EndProcedure
\end{algorithmic}
\end{algorithm}

\begin{theorem} \label{thm:next-partition-correctness}
If $a\in Part(n)$, Algorithm~\ref{alg:immediately-next-partition} returns the partition immediately next to $a$. In case the algorithm returns \Null, $a$ is the maximum element of $Part(n)$.
\end{theorem}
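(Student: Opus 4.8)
The plan is to verify directly, following the branching of Algorithm~\ref{alg:immediately-next-partition}, that for an input $a=(a_1,\dots,a_k)\in Part(n)$ the returned tuple is (i) a partition of $n$, (ii) strictly larger than $a$, and (iii) $\le$ every partition of $n$ that is strictly larger than $a$; and that \Null{} is returned exactly when $a$ is the maximum of $Part(n)$. No induction on $n$ is needed. The only auxiliary fact, which I would prove by a short induction on $T$, is the following: for integers $m\ge1$ and $T\ge m$, the lexicographically least non-decreasing sequence of integers $\ge m$ whose sum is $T$ consists of $q-1$ copies of $m$ followed by $m+r$, where $T=qm+r$ with $0\le r<m$ (when $r=0$ this is the sequence of $q$ copies of $m$).

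I would first handle the branch $a_1=\lfloor n/2\rfloor$. Since $a$ is non-decreasing with $k\ge2$ parts, $n=\sum_i a_i\ge k\lfloor n/2\rfloor$, and this inequality forces $k=2$ — in which case $a=(\lfloor n/2\rfloor,\lceil n/2\rceil)$, the maximum of $Part(n)$ by Fact~\ref{fact:min-and-max-partitions} — unless $n=3$, where $a$ is either $(1,2)$ (again the maximum) or $(1,1,1)$. This is exactly what the algorithm does: when $n\ne3$ it returns \Null{}, which is correct; when $n=3$ it returns \Null{} if $a_2=\lceil 3/2\rceil$, i.e.\ $a=(1,2)$, and returns $(1,2)$ otherwise, i.e.\ when $a=(1,1,1)$, which is correct since $Part(3)=\{(1,1,1),(1,2)\}$ with $(1,1,1)<(1,2)$. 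In particular, in this branch \Null{} is returned precisely when $a$ is the maximum.

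Assume next that $a_1\ne\lfloor n/2\rfloor$; then $a$ is not the maximum (whose first part is $\lfloor n/2\rfloor$), and it remains to prove (i)--(iii) for the returned tuple $b$. In both remaining branches $b$ agrees with $a$ on coordinates $1,\dots,k-2$, so the residual sum over coordinates $\ge k-1$ is $a_{k-1}+a_k$. Take any $c\in Part(n)$ with $c>a$ and let $j$ be the first coordinate with $c_j\ne a_j$, so $c_j>a_j$: if $j\le k-2$ then $b_j=a_j<c_j$, hence $b<c$; and $j\ge k$ is impossible, since then $c$ would agree with $a$ on coordinates $1,\dots,k-1$, forcing $c=a$ or else $c_k\le a_k-1<a_k$, hence $c<a$. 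Thus the only case to check is $j=k-1$. If $a_k-a_{k-1}\le1$, then $k\ge3$ (taking $k=2$ here would again give $a=(\lfloor n/2\rfloor,\lceil n/2\rceil)$, contrary to assumption), and the output $b=(a_1,\dots,a_{k-2},a_{k-1}+a_k)$ is a non-decreasing sequence of $k-1\ge2$ positive integers with sum $n$, with $b>a$ because $a_{k-1}<a_{k-1}+a_k$. For $j=k-1$, the parts of $c$ on coordinates $\ge k-1$ are all $\ge a_{k-1}+1$, so two or more of them would sum to at least $2(a_{k-1}+1)>a_{k-1}+a_k$ (using $a_k-a_{k-1}\le1$); hence $c$ has there the single part $a_{k-1}+a_k$, so $c=b$. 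Therefore $b$ is immediately next to $a$.

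Finally I would treat the branch $a_k-a_{k-1}>1$. Put $m=a_{k-1}+1$, so $a_k-1\ge m$; write $a_k-1=qm+r$ with $0\le r<m$ (these are the $q,r$ computed after the updates $a_{k-1}\gets a_{k-1}+1$, $a_k\gets a_k-1$). The returned tuple is $(a_1,\dots,a_{k-2},m,\dots)$ whose coordinates from $k-1$ on are $q$ copies of $m$ followed by $m+r$; equivalently, coordinate $k-1$ is $m$ and coordinates $\ge k$ are the greedy sequence of the auxiliary fact for $T=a_k-1$ (the case $q=1$ is listed separately in the pseudocode only for notational convenience). Routine checks show this is a partition of $n$ with $b>a$ (first difference at coordinate $k-1$, where $a_{k-1}<m$). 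For $j=k-1$: since $c_{k-1}\ge a_{k-1}+1=m=b_{k-1}$, either $c_{k-1}>m$, giving $b<c$, or $c_{k-1}=m$, in which case coordinates $\ge k$ of $c$ form a non-decreasing sequence of integers $\ge m$ with sum $a_k-1$, hence, by the auxiliary fact, lexicographically at least the greedy sequence, so $c\ge b$. This proves the theorem. The bulk of the work is the minimality part of (iii): isolating the reduction to ``first difference at coordinate $k-1$'', proving the greedy-filling auxiliary fact, and carefully handling the degenerate situations ($k=2$, $q=1$, $n\le3$, and the empty prefix $a_1,\dots,a_{k-2}$) that dictate the special-cased formulas in the pseudocode.
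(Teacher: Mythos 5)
Your proof is correct and follows essentially the same route as the paper's: a direct case analysis mirroring the algorithm's branches, with minimality established by examining the first index $j$ at which a competitor $c>a$ exceeds $a$. In fact your treatment of the branch $a_k-a_{k-1}>1$ is slightly more complete than the paper's, since your greedy-filling lemma handles the subcase $c_{k-1}=b_{k-1}$ by comparing the tail coordinates, a step the paper's argument passes over when it concludes $c\geq b'$ directly from $c_{k-1}\geq b'_{k-1}$.
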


\begin{proof}

Let $a=\a_k \in Part(n)$. Cases $n=2$ and $n=3$ are trivial.

By definition of partition, necessarily $a_1\leq \piso{\frac{n}{2}}$. Suppose $n>3$ and $a_1<\piso{\frac{n}{2}}$. Consider the following cases:

\bigskip

\noindent {\em Case 1}: $a_k-a_{k-1} \leq 1$.

\medskip
	
In this case, the algorithm returns the partition $b:=\b_{k-1}$, where $b_i=a_i$ for $i\in \{1,\ldots,k-2\}$ and $b_{k-1}=a_{k-1}+a_k$.
	
Clearly, $b>a$ and $b_{k-1} \in Part(n)$.	
	
We prove that $b$ is the partition immediately next to $a$. Let $c=(c_i)_m\in Part(n)$ such that $c>a$. Our aim is to prove that $c\geq b$. By definition there is a minimum index $j\leq k$ such that $c_j>a_j$.

If $j\leq k-2$ then $c_j>a_j=b_j$, and it follows that $c>b$.

If $j=k-1$ then $c_i=a_i=b_i$ for $i \in \{1,\ldots,k-2\}$ and $c_{k-1}>a_{k-1}$. Thus, $c_{k-1} \geq a_{k-1}+1 \geq a_k$. Also, note that $c$ has $k-1$ terms, otherwise we would have $c_k\geq c_{k-1} \geq a_k$. This implies $\sum^k c_i>\sum^k a_i=n$, which is a contradiction. Therefore,
$$\sum_{i=1}^{k-2}c_i+c_{k-1}=\sum_{i=1}^{k-2}a_i+a_{k-1}+a_k,$$
and this implies $c_{k-1}=a_{k-1}+a_k=b_{k-1}$, that is, $c=b$.
		
Finally, if $j=k$ then $\sum^k c_i>\sum^k a_i=n$, which is a contradiction. Hence, $c\geq b$.
	
\bigskip

\noindent {\em Case 2}: $a_k-a_{k-1}>1$.

\medskip

In this case, lines 6 and 7	generate the partition $b:=\b_k=(b_1,\ldots,b_k)$, where $b_i=a_i$ for $i \in \{1,\ldots,k-2\}$, $b_{k-1}=a_{k-1}+1$, and $b_k=a_k-1$.
		
Notice that $\sum^{k} b_i=\sum^{k} a_i=n$, and then $b \in Part(n)$. Let $q$ and $r$ be the quotient and the rest, respectively, of the integer division of $b_k$ by $b_{k-1}$ (line 8). We have two sub-cases to consider:

\bigskip
	
\noindent {\em Case 2a}: $q>1$.

\medskip
		
In this case, the algorithm returns the partition
$$b'=(b_1,\ldots,b_{k-2},b_{k-1},\underbrace{b_{k-1},\ldots,b_{k-1}}_{q-1 \ \hbox{times}}, b_{k-1}+r).$$

Note that $b'$ has $k+q-1$ terms. Note also that:
\begin{itemize}
\item[] (a) $b'_i=a_i$, if $i\in \{1,\ldots,k-2\}$;
\item[] (b) $b'_i=a_{k-1}+1$, if $i\in\{k-1,\ldots,k+q-2\}$;
\item[] (c) $b'_{k+q-1}=a_{k-1}+1+r$.
\end{itemize}

\noindent Thus:
	
\begin{eqnarray*}
\sum^{k+q-1}_{i=1} b'_i
&=& (\sum^{k-2}_{i=1} b'_i)+b'_{k-1}+(\sum^{k+q-2}_{i=k} b'_i)+b'_{k+q-1}\\
&=&(\sum^{k-2}_{i=1} a_i)+(a_{k-1}+1)+(a_{k-1}+1)(q-1)+(a_{k-1}+1+r)\\
&=&\sum^{k}_{i=1} a_i=n.
\end{eqnarray*}	
	
Therefore $b'\in Part(n)$ and $b'\geq a$.
	
We prove that $b'$ is immediately next to $a$ in $Part(n)$. Let $c=(c_i)_m \in Part(n)$ such that $c>a$, and let $j\leq k$ be the minimum index such that $c_j>a_j$.
	
If $j\leq k-2$ then $c_j>a_j=b'_j$, i.e., $c>b'$.
	
If $j=k-1$ then $c_{k-1}>a_{k-1}$, i.e., $c_{k-1}\geq a_{k-1}+1=b'_{k-1}$. Thus, $c\geq b'$.
	
Finally, if $j=k$ then $\sum^k c_i>\sum^k a_i=n$, which is a contradiction. Thus $c\geq b'$.

\bigskip
	
\noindent {\em Case 2b}: $q=1$.

\medskip
	
In this case the algorithm returns the partition $b=\b_k$. Since $b_{k-1}=a_{k-1}+1>a_{k-1}$, we have $b>a$.
	
Again, we prove that $b$ is immediately next to $a$ in $Part(n)$. Let $c=(c_i)_m \in Part(n)$ such that $c>a$, and let $j\leq k$ be the minimum index such that $c_j>a_j$.
	
If $j\leq k-2$ then $c_{k-2}>a_{k-2}=b_{k-2}$, i.e., $c>b$.
	
If $j=k-1$ then $c_{k-1}>a_{k-1}$, i.e., $c_{k-1}\geq a_{k-1}+1=b_{k-1}$. Thus, $c\geq b$.
	
Finally, if $j=k$ then $\sum^k c_i > \sum^k a_i=n$, a contradiction. Thus $c\geq b$.

To conclude the proof, if $n>3$ and $a_1=\lfloor\frac{n}{2}\rfloor$ then the algorithm returns \Null.

\end{proof}

\begin{coro}\label{coro:next-partition-complexity}
Let $a\in Part(n)$. Then $\mathit{NextPartition}(a)$ runs in $O(n)$ time.
\end{coro}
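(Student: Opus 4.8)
The plan is to simply walk through Algorithm~\ref{alg:immediately-next-partition} line by line and bound the cost of each step, using the fact that a partition $a=\a_k\in Part(n)$ has $k\leq n$ components and that each component is an integer bounded by $n$ (so it fits in $O(\log n)$ bits, and arithmetic on such integers is $O(1)$ under the usual word-RAM assumption, or $O(\log n)$ under a bit-complexity model --- in either case dominated by the term below). First I would observe that computing $n=\sum_{i=1}^k a_i$ in line~2 costs $O(k)=O(n)$. The tests in lines~3 and~4 (comparing $a_1$ with $\lfloor n/2\rfloor$, and $a_k-a_{k-1}$ with $1$) are $O(1)$. The assignments in lines~6--7 and the integer division and remainder in line~8 are $O(1)$.

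The only remaining point is the cost of \emph{writing down} the output partition, and this is where one must be slightly careful, since in Case~2a the returned sequence $(a_1,\ldots,a_{k-2},\underbrace{a_{k-1},\ldots,a_{k-1}}_{q},(a_{k-1}+r))$ has $k+q-1$ components, which is more than the $k$ components of the input. The key observation is that $q=a_k\,\mathrm{div}\,a_{k-1}$ after the adjustment $a_{k-1}\gets a_{k-1}+1$, $a_k\gets a_k-1$; since $a_{k-1}\geq 1$ we get $q\leq a_k\leq n$, and in fact $q\cdot a_{k-1}\leq a_k< n$, so the total length $k+q-1$ of the output is still $O(n)$. (More sharply, every component of the output is a positive integer and they sum to $n$, so there are at most $n$ of them; this already bounds the output length by $n$ without any case analysis.) Hence emitting the output costs $O(n)$ in every branch.

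Putting these together, each branch of the algorithm performs $O(n)$ work for the initial summation, $O(1)$ work for the arithmetic and comparisons, and $O(n)$ work to produce the output, for a total of $O(n)$. I do not expect any real obstacle here; the one subtlety worth stating explicitly --- and the only place a naive reading could go wrong --- is the bound on the number of components produced in Case~2a, which is why I would single it out, but it is immediately settled by the remark that the output is a partition of $n$ and therefore has at most $n$ parts.
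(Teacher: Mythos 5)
Your proof is correct and follows essentially the same route as the paper's: both identify the construction of the $q$ repeated components in Case~2a (line~9) as the only step that is not constant time and bound it by $O(n)$. Your observation that the output, being a partition of $n$ into positive parts, has at most $n$ components is a slightly cleaner way to get the bound than the paper's explicit worst case $q=(n-2)/2$ for input $(1,n-1)$, and you are also more careful about the $O(k)$ cost of summing the input in line~2, which the paper glosses over.
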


\begin{proof}
The only case that does not have a constant-time complexity appears in line 9, whose worst case occurs when the input is $a=(1,n-1)\in Part(n)$. In such a situation, $q=\piso{\frac{a_k-1}{a_{k-1}+1}}=\frac{n-2}{2}$ and the number of operations performed is $O(q)=O(n)$.\end{proof}

In order to develop a procedure for determining the tree immediately next to the current tree being generated, we introduce the concept of {\em pivot node}, which indicates the place of the current tree where changes will be made.

\begin{defi}
Let $T \in \T$ be an ordered tree. A node $v$ of $T$ is said to be {\em exhausted} if it is a leaf or the partition induced by $v$ in $T$ is the maximum element of $Part(l(v))$.
\end{defi}

\begin{fact} \label{fact:exhausted-node}
Given an ordered and labeled tree $T \in \T_n$, verifying whether a node $v$ of $T$ is exhausted can be done in constant time (see Fact~\ref{fact:min-and-max-partitions}).
\end{fact}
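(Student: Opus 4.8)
The plan is to unfold the definition of an exhausted node and show that deciding the condition amounts to reading a few of the labels already stored in $T$ and performing a constant number of integer comparisons, the crucial ingredient being the explicit form of the maximum partition given in Fact~\ref{fact:min-and-max-partitions}.

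First I would dispose of the leaf case: $v$ is a leaf exactly when $l(v)=1$, and since $T$ is labeled this is an $O(1)$ test. Next, assume $v$ is internal, with children $v_1,\ldots,v_k$ stored in non-decreasing order $l(v_1)\leq\cdots\leq l(v_k)$ (possible because $T$ is ordered). By definition $v$ is exhausted iff its induced partition $(l(v_1),\ldots,l(v_k))$ is the maximum element of $Part(l(v))$. By Fact~\ref{fact:min-and-max-partitions} that maximum is the two-component partition $(\lfloor l(v)/2\rfloor,\lceil l(v)/2\rceil)$; hence the induced partition equals it if and only if $k=2$ and $l(v_1)=\lfloor l(v)/2\rfloor$ (and then $l(v_2)=l(v)-l(v_1)=\lceil l(v)/2\rceil$ automatically). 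The comparison $l(v_1)=\lfloor l(v)/2\rfloor$ involves only stored labels, hence costs $O(1)$, and checking that $v$ has exactly two children is $O(1)$ for the standard representation of a rooted tree (an array, or a linked list with a size field, of children at each node).

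The only mildly delicate point — the main obstacle, such as it is — is making the test independent of the particular data structure used for the list of children. I would handle this by observing that one need not count the children explicitly: since $l(v_1)\leq\cdots\leq l(v_k)$ and $\sum_i l(v_i)=l(v)$, for $l(v)\geq 4$ the inequality $l(v_1)\geq\lfloor l(v)/2\rfloor$ already forces $k=2$ (otherwise $l(v)\geq 3\,l(v_1)\geq 3\lfloor l(v)/2\rfloor>l(v)$), and once $k=2$ the same inequality together with $l(v_1)\leq l(v_2)$ forces $l(v_1)=\lfloor l(v)/2\rfloor$; the residual cases $l(v)\in\{2,3\}$ are immediate, since $Part(2)=\{(1,1)\}$ and for $l(v)=3$ the node is exhausted precisely when it has two children. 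In every case the verification reduces to $O(1)$ label lookups and integer comparisons, which establishes the fact.
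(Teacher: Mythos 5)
Your proof is correct and follows exactly the route the paper intends: the paper states this as a Fact without proof, relying precisely on the observation that, by Fact~\ref{fact:min-and-max-partitions}, exhaustion reduces to checking $l(v)=1$ or comparing the stored labels of $v$ and its first child against $\lfloor l(v)/2\rfloor$ with a constant number of lookups. Your extra remark showing that counting children can be avoided is a harmless refinement, not a departure from the paper's argument.
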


For an ordered tree $T\in \T$ we define the {\em inverted post-order traversal} of $T$ by the following recursive procedure: if $children(root(T))=\{v_1 \leq \cdots \leq v_k\}$, traverse (in the given sequence) the subtrees $T(v_k),T(v_{k-1}),\ldots,T(v_2),T(v_1)$ in inverted post-order, and then visit $root(T)$. The inverted post-order traversal can be done in $O(n)$ time, by Fact~\ref{fact:number-of-internal-nodes}.

\begin{defi} \label{def:tree-pivot}
Let $T\in\T$ be an ordered tree such that the inverted post-order traversal of $T$ visits its nodes in the sequence $v_1,\ldots,v_m$, where $m=|V(T)|$. The {\em pivot} of $T$, if it exists, is the node $v_i$ with minimum index such that $v_i$ is not exhausted.
\end{defi}

The square node in Figure~\ref{fig:ordered-tree} is the pivot of the tree.

Given an ordered and labeled tree $T \in \T_n$, the procedure for determining the pivot of $T$ can be simply done using the definition. In other words, simply perform an inverted post-order traversal and check whether the current visited node is exhausted or not. We call this procedure $\mathit{FindPivot}(T)$. If there is no pivot in the tree, the procedure returns \Null.

\begin{fact} \label{fact:pivot-complexity}
The search for the pivot node in a tree $T\in\T_n$ can be done in $O(n)$ time.
\end{fact}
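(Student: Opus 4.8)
The plan is to argue directly from the definition of $\mathit{FindPivot}(T)$ together with the complexity facts already established. First I would recall that, by Definition~\ref{def:tree-pivot}, the pivot is detected by performing a single inverted post-order traversal of $T$ and returning the first visited node that fails the exhaustion test; if no such node exists, the procedure returns \Null. So the running time is the cost of the traversal plus the total cost of the exhaustion tests performed along the way.

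Next I would bound each of these two contributions. By Fact~\ref{fact:number-of-internal-nodes}, $T$ has at most $2n-1$ nodes, and (as observed in the paragraph preceding Definition~\ref{def:tree-pivot}) the inverted post-order traversal therefore runs in $O(n)$ time. For the exhaustion tests, note that $\mathit{FindPivot}$ operates on an ordered and \emph{labeled} tree; by Fact~\ref{fact:exhausted-node}, testing whether a given node $v$ is exhausted then takes constant time, since one only needs to compare the partition induced by $v$ against the explicit formula for the maximum element of $Part(l(v))$ from Fact~\ref{fact:min-and-max-partitions}. Since at most $2n-1$ nodes are visited and each incurs an $O(1)$ test, the tests contribute $O(n)$ in total. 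Adding the two bounds gives the claimed $O(n)$ running time. If one prefers not to assume the tree is pre-labeled, I would simply prepend an $O(n)$ labeling pass, which is possible by Facts~\ref{fact:sum-of-children} and~\ref{fact:number-of-internal-nodes}, so the bound is unchanged.

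There is no real obstacle here: the only point that needs care is that the per-node exhaustion check is genuinely constant time, and this is precisely what Fact~\ref{fact:exhausted-node} (via the closed-form description of the maximum partition in Fact~\ref{fact:min-and-max-partitions}) guarantees, provided the values $l(\cdot)$ are available on the nodes.
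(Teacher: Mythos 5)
Your proposal is correct and follows the same route the paper takes: the paper treats this as an immediate consequence of performing the $O(n)$ inverted post-order traversal on a labeled tree (bounded by Fact~\ref{fact:number-of-internal-nodes}) while applying the constant-time exhaustion test of Fact~\ref{fact:exhausted-node} at each visited node. Your added remark about prepending an $O(n)$ labeling pass if the labels are not already present is a reasonable clarification consistent with the paper's earlier observation that labeling can be done in $O(n)$.
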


Another step towards our cograph generation algorithm is to describe a procedure that replaces a subtree $T(v)$ of a given ordered $T\in\T_n$ by another subtree $T'$ rooted at $v$ with the same number of leaves as $T(v)$. The procedure receives as input a node $v$ of $T$ and a partition $\a_k \in Part(l(v))$, and replaces $T(v)$ by a subtree $T'$ such that $root(T')=v$, $children_{T'}(v)=\{v_1,\ldots,v_k\}$, and, for $1\leq i\leq k$, $a_i=l(v_i)$ and $children_{T'}(v_i)$ consists of $a_i$ leaves (if $a_i>1$). The procedure returns the new tree $T_R$ obtained from $T$ in this way.


\begin{algorithm}[htbp]
\caption{$\mathit{RebuildNode}$}
\label{alg:reconstroiVertice}
\begin{algorithmic}[1]
\Require node $v$ of an ordered tree $T\in\T_n$; partition $\a_k \in Part(l(v))$
\Ensure new tree $T_R$ obtained from $T$ where $T(v)$ is replaced by a subtree $T'$ such that $\a_k$ is the partition induced by $v$ in $T'$
\Procedure{$\mathit{RebuildNode}$}{$v,\a_k$}
  \State Replace the children of $v$ by $v_1,\ldots,v_k$ such that $\a_k$ becomes the partition induced by $v$
  \For{$i$}{1}{$k$}
   \State \textbf{if} {$a_i>1$} \textbf{then} insert $a_i$ leaves as children of $v_i$
  \EndFor
  \State \Return $T$
\EndProcedure
\end{algorithmic}
\end{algorithm}

\bigskip

By Fact~\ref{fact:sum-of-children}, we have:

\begin{fact} \label{fact:rebuild-complexity}
Let $\a_k \in Part(l(v))$. Then procedure $\mathit{RebuildNode}(v,\a_k)$ runs in $O(l(v))$ time.
\end{fact}

\begin{lem} \label{lem:maximum-tree-no-pivot}
A tree $T\in\T_n$ is the maximum element of $\T_n$ if and only if $T$ contains no pivot node.
\end{lem}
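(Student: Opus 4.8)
The statement is an ``if and only if'', so I will prove the two implications separately, and it is cleanest to prove the contrapositive in at least one direction. The key fact to exploit is that $T_{\mathit{min}}$ aside, the tree ordering of Definition~\ref{def:ordering-trees} reduces (via node trichotomy and Case~3 of Definition~\ref{def:nodes-inequality}) to a lexicographic comparison: first on the partition induced by the root, then recursively on the ordered children. The pivot, being the first non-exhausted node in inverted post-order, is precisely the first place (reading children right-to-left, bottom-up) where a local increase is still possible. So the moral content is: ``no pivot'' means no local increase is possible anywhere, which should force maximality; and conversely a pivot gives an explicit tree that is strictly larger.

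\textbf{Direction 1: no pivot $\Rightarrow$ $T$ is maximum.} Suppose $T$ has no pivot, i.e.\ every node of $T$ is exhausted. I would prove by structural induction on the subtrees (equivalently, induction on $l(v)$) that for every node $v$, the subtree $T(v)$ is the maximum element of $\T_{l(v)}$ in the ordering induced on subtrees by Definition~\ref{def:nodes-inequality}. The base case $l(v)=1$ is trivial. For the inductive step: since $v$ is exhausted, the partition it induces is the maximum of $Part(l(v))$, namely $(\lfloor l(v)/2\rfloor,\lceil l(v)/2\rceil)$ by Fact~\ref{fact:min-and-max-partitions}; and each child $v_i$ is also exhausted (no pivot), so by the induction hypothesis each $T(v_i)$ is maximum in $\T_{l(v_i)}$. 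Now take any $T''\in\T_{l(v)}$: its induced root partition is $\leq$ that of $v$ by Fact~\ref{fact:min-and-max-partitions}; if it is strictly smaller, $T''<T(v)$ by Case~3.1; if it is equal, then comparing children in order (Case~3.2) each child of $T''$ is $\leq$ the corresponding $T(v_i)$ by the induction hypothesis, so $T''\leq T(v)$. Applying this with $v=root(T)$ gives that $T$ is the maximum of $\T_n$. I would remark that the $n=1$ and $n=2$ cases are degenerate (single root, or $T_{\mathit{min}}$ with no pivot) and consistent with the statement.

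\textbf{Direction 2: $T$ has a pivot $\Rightarrow$ $T$ is not maximum.} Here I give an explicit larger tree. Let $v$ be the pivot and let $\a_k$ be the partition induced by $v$ in $T$; since $v$ is not exhausted, $\a_k$ is not the maximum of $Part(l(v))$, so by Theorem~\ref{thm:next-partition-correctness} there is a partition $\b_m$ immediately next to $\a_k$, with $\b_m>\a_k$. Form $T_R=\mathit{RebuildNode}(v,\b_m)$, i.e.\ replace $T(v)$ by the tree whose root induces $\b_m$ and all of whose grandchildren are leaves; this does not change $l(v)$, so $T_R\in\T_n$. I claim $T_R>T$. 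Since $v$ is the first non-exhausted node in inverted post-order, every node visited before $v$ — that is, every node in the subtrees $T(v_k),\dots$ lying ``after'' $v$ among its siblings, and more generally everything the traversal has already passed — is exhausted, hence (by Direction~1's induction applied to those subtrees) already maximum; so no ancestor of $v$ can be improved by a right-sibling-of-$v$ change, and the comparison $T$ vs.\ $T_R$ comes down to comparing, along the path from the root to $v$, the subtrees rooted at $v$ and its exhausted right-siblings. Walking up from $v$: the new subtree rooted at $v$ has induced partition $\b_m>\a_k$, so it is $>$ the old $T(v)$ by Case~3.1; since the siblings of $v$ that the traversal visited before $v$ are unchanged, $parent(v)$ strictly increases by Case~3.2.b; propagating this up to the root via Case~3.2.b again (all earlier-visited siblings at each level being untouched) gives $root(T_R)>root(T)$, hence $T_R>T$ and $T$ is not maximum.

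\textbf{Main obstacle.} The delicate point is Direction~2: carefully justifying that replacing $T(v)$ by the $\b_m$-subtree makes the root strictly larger rather than merely not-smaller. This requires pinning down exactly which nodes the inverted post-order traversal has already visited when it reaches the pivot (namely $v$ itself is the first non-exhausted one, but its later-indexed siblings and their subtrees have been visited and are all exhausted hence maximal and untouched by the rebuild), and then threading Case~3.2.b of Definition~\ref{def:nodes-inequality} up the path $P_v$ — at each ancestor, the children with smaller index (visited earlier, hence to the ``right'' in the reversed traversal) are unchanged, so the first place the two trees differ in that siblinghood is the ancestor lying on $P_v$, where strict increase has already been established one level down. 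Getting the indexing/orientation conventions of ``inverted post-order'' aligned with the left-to-right order in Case~3.2 is the part that needs care; everything else is routine induction on $l(v)$.
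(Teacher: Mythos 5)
Your proof is correct and follows the same route as the paper: for the substantive direction you construct $T_R=\mathit{RebuildNode}(v,b)$ with $b>a$ at the pivot and argue $T_R>T$, which is exactly the paper's argument (the paper simply asserts ``Clearly, $T_R>T$'' and calls the other direction trivial). Your added inductive justifications --- that an all-exhausted tree is maximal in $\T_n$, and that the strict increase at $v$ propagates to the root via Case 3.2.b because all other siblings are untouched and the right siblings remain maximal --- are sound fillings-in of the details the paper omits.
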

\begin{proof}
Suppose $T$ contains a pivot node $v$ such that the partition induced by $v$ in $T$ is $a\in Part(l(v))$. Thus there is $b\in Part(l(v))$ such that $b>a$. Let $T_R$ be the tree returned by $\mathit{RebuildNode}(v,b)$. Clearly, $T_R>T $ and therefore $T$ is not the maximum element of $\T_n$. The other direction is trivial.\end{proof}

Now we are able to describe the procedure that generates the tree immediately next to a given tree $T$, which is crucial to our cograph generation algorithm.

We use the following notation: Let $S(v)=\{v_1,\ldots,v_m\}$ be the ordered siblinghood of a node $v$. Assume $v=v_j$ for some $j\in\{1,\ldots,m\}$ and define the sets $S^+(v_j)=\{v_{j+1},\ldots,v_m\}$ and $S^-(v_j)=\{v_1,\ldots,v_{j-1}\}$. Note that $S^-(v_j) \cup \{v_j\} \cup S^+(v_j)$ is a partition of $S(v)=S(v_j)$.

\begin{algorithm}[H]
\caption{$\mathit{NextTree}$}
\label{alg:immediately-next-tree}
\begin{algorithmic}[1]
\Require ordered and labeled tree $T \in \T_n$
\Ensure $T'\in \T_n$ such that $T'$ is the tree immediately next to $T$
\Procedure{$\mathit{NextTree}$}{$T$}
  \State $v \gets \mathit{FindPivot}(T)$
  \If {$v \neq \Null$}
	\State $b_m \gets \mathit{NextPartition}(\a_k)$, where $\a_k$ is the partition induced by $v$
	\State $v \gets \mathit{RebuildNode}(v,\b_m)$
	    \State $x \gets v$
	\Repeat
	  \ForAll {$y \in S^+(x)$}
	      \State \textbf{if} $l(y)=l(x)$ \textbf{then} copy subtree $T(x)$ in $T(y)$
	  \State \textbf{else} $y \gets \mathit{RebuildNode}(y,c)$, where $c=(1)_{l(y)}$
	  \EndFor
	  \State $x \gets parent(x)$
	\Until{$x = \Null$}
    \State \Return $T$
  \Else \hbox{ } \Return \Null \Comment there is no immediately next tree
  \EndIf
\EndProcedure
\end{algorithmic}
\end{algorithm}

Algorithm~\ref{alg:immediately-next-tree} finds the tree pivot $v$ and the partition that is immediately next to the one induced by $v$. Next, it ``restarts'' each node $w$ visited during $\mathit{FindPivot}(T)$ to the lowest possible configuration for $w$. This idea is based on the fact that comparison between trees is done from left to right in each siblinghood, while the searching for the pivot occurs from right to left.

Figure~\ref{fig:next-tree} depicts the tree immediately next to the tree depicted in  Figure~\ref{fig:ordered-tree}.

\begin{sidewaysfigure}[htbp]
\centering
\includegraphics[scale=0.98]{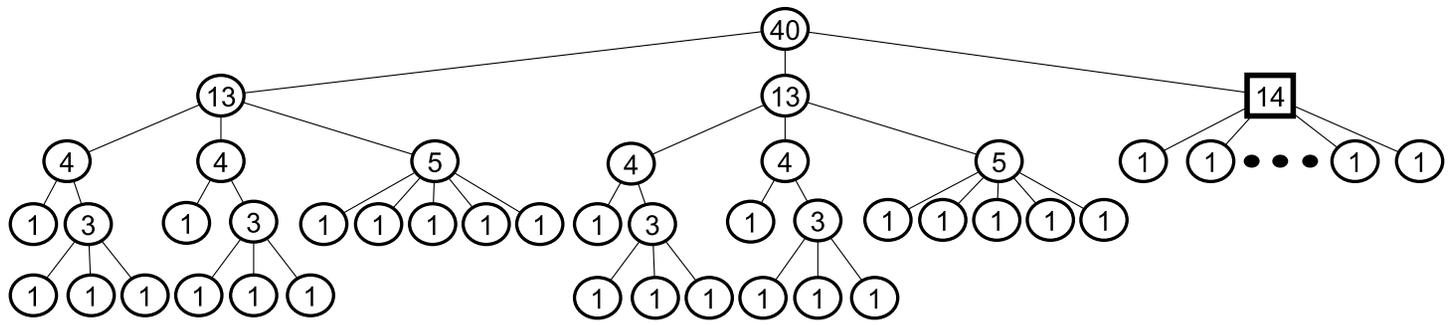}
\caption{Tree immediately next to the tree in Figure~\ref{fig:ordered-tree}. Its pivot is the square node.\label{fig:next-tree}}
\end{sidewaysfigure}

Before checking the correctness of Algorithm~\ref{alg:immediately-next-tree}, we prove the following lemma.

\begin{lem} \label{lem:next-tree-keeps-ordering}
Let $T\in\T_n$ be an ordered tree that is not the maximum element of $\T_n$. Then the tree returned by $\mathit{NextTree}(T)$ is ordered.
\end{lem}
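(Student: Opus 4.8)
The plan is to trace exactly what $\mathit{NextTree}$ changes in $T$ and then to verify orderedness one siblinghood at a time. Since $T$ is not maximal, Lemma~\ref{lem:maximum-tree-no-pivot} gives a pivot $v=\mathit{FindPivot}(T)$, which is internal (leaves are exhausted), so the partition $\a_k$ induced by $v$ is not the maximum of $Part(l(v))$. Let $v=p_0,p_1,\ldots,p_d=root(T)$ be the root path of $v$; these are precisely the successive values of the variable $x$ in the repeat loop. Inspecting the algorithm, the only subtrees that are modified are $T(v)$ (replaced by $\mathit{RebuildNode}(v,\b_m)$, where $\b_m=\mathit{NextPartition}(\a_k)>\a_k$) and, for each $i$, the subtrees $T(y)$ with $y\in S^+(p_i)$, which are ``restarted'' in the iteration $x=p_i$, each either overwritten by a copy of the current $T(p_i)$ or rebuilt with the all-ones partition. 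Every other subtree — in particular $T(u)$ for a left sibling $u\in S^-(p_i)$, and anything disjoint from the root path — is untouched and hence still ordered; moreover all these operations preserve leaf counts, so every node keeps its label throughout.

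First I would record a preliminary observation: $\mathit{RebuildNode}(u,\sigma)$ always returns an ordered subtree, because its children are labelled by the non-decreasing sequence $\sigma$, and two of them with equal label are either both leaves or both one-level stars on the same number of leaves, hence equivalent by Lemma~\ref{lem:equality-isomorphism}. This handles $T(v)$ right after the call in line~5, and also every restart of the form $\mathit{RebuildNode}(y,(1)_{l(y)})$.

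Then comes the core of the argument: an induction along $p_0,p_1,\ldots,p_d$ proving, for each $i$, the conjunction: (a) the current subtree $T(p_i)$ is ordered; (b) the node $p_i$ is strictly larger in the new tree than in the old one, in the order of Definition~\ref{def:nodes-inequality}; and, as an auxiliary statement, (c) every restarted $y\in S^+(p_i)$ satisfies $p_i\leq y$ in the new tree, with $y\equi p_i$ exactly when $l(y)=l(p_i)$ (and $T(y)$ ordered). For $i=0$: (a) is the preliminary observation; (b) holds because the new $p_0=v$ induces $\b_m>\a_k$, the partition induced by the old $v$, so the new $v$ exceeds the old one by Case~3.1 (note $l(v)\neq1$); and (c) is immediate from (a) and the structure of a restart. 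For the step $i\geq1$: labels being preserved, $p_i$ induces the same partition in both trees, so the comparison of old versus new $p_i$ falls under Case~3.2 and reduces to comparing their $l$-sorted child sequences. Write the old children of $p_i$ in sorted order as $w_1\leq\cdots\leq w_s$ with $p_{i-1}=w_t$. Every new child other than $w_1,\ldots,w_{t-1}$ is at least the new $p_{i-1}$, which by (b) for $i-1$ already strictly exceeds $w_t=$ old $p_{i-1}$: indeed the new $p_{i-1}$ and its copies are $\equi$ one another, and the remaining restarts have strictly larger leaf count than $p_{i-1}$, so are larger by Case~1. Hence in the new tree the first $t-1$ children are still $w_1,\ldots,w_{t-1}$ and the $t$-th is the new $p_{i-1}$; the two child sequences therefore agree up to position $t-1$ and the new one strictly wins at position $t$, so Case~3.2.b yields (b). For (a), every siblinghood inside the new $T(p_i)$ is either inside the new $T(p_{i-1})$ (ordered by (a) for $i-1$), inside a restarted $T(y)$ with $y\in S^+(p_{i-1})$ (ordered by (c) for $i-1$), or $children(p_i)=S^-(p_{i-1})\cup\{p_{i-1}\}\cup S^+(p_{i-1})$ itself, which is ordered because $S^-(p_{i-1})$ is untouched, each $u\in S^-(p_{i-1})$ satisfies $u<$ new $p_{i-1}$ (from $u\leq$ old $p_{i-1}$ and (b) for $i-1$, or trivially if $l(u)<l(p_{i-1})$), and (c) for $i-1$ places all of $S^+(p_{i-1})$ at or above the new $p_{i-1}$. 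Finally (c) for $i$ follows from (a) for $i$ as in the base case. Taking $i=d$ in (a), the returned tree $T(p_d)$ is ordered.

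I expect the inductive step for (b) to be the delicate point: one has to reason carefully about which positions the modified child and the restarted children occupy once the siblinghood of $p_i$ is re-sorted, so that Case~3.2.b can be applied legitimately. The crux is that enlarging $p_{i-1}$ and replacing its right siblings by copies of $p_{i-1}$ or by minimum trees can only push the first (left-to-right) point of disagreement to the slot $p_{i-1}$ used to occupy — where the new tree strictly wins — while everything strictly to the left of that slot is literally unchanged.
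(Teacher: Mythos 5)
Your proof is correct and follows essentially the same route as the paper's: identify the pivot, observe that only the subtrees rebuilt or copied along the root path change, and verify each affected siblinghood (the rebuilt children, the restarted right siblings, and the position of the modified node among its siblings). Your explicit induction along the root path, with the invariant that the new node strictly exceeds the old one at every level, is in fact slightly more thorough than the paper's write-up, which establishes that comparison only for the pivot within this lemma and leaves the analogous positioning of the ancestors to the proof of Theorem~\ref{thm:next-tree-correctness}.
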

\begin{proof}

Let $T'$ be the tree returned by $\mathit{NextTree}(T)$. Note that $T'\in \T_n$ because its number of leaves is the same as in $T$. Let $v$ denote the pivot of $T$ (whose existence is guaranteed by Lemma~\ref{lem:maximum-tree-no-pivot}). Denote its corresponding node in $T'$ by $v'$ (line 5). In general, for each $x$ in $T$, denote by $x'$ its corresponding node in $T'$, if it exists (lines 9 and 10).

For each $x'$ ancestor of $v'$, we prove that the siblinghoods $ S_{T'}(x')$ and $S_{T'}(v')$ are ordered. Since other siblinghoods did not change, their orderings follow directly from the ordering of $T$. Moreover, by the same reason, the set $S_{T'}^-(x') $ is already ordered for each $x'$. Then it remains to analyze the orderings of $S^+_{T'}(x')$ and $S^+_{T'}(v')$. In fact, for every $y' \in S_{T'}^+(x') \cup \{x'\}$ the following cases are valid:

\begin{enumerate}
\setlength\itemsep{-1.0pt}
\item If $x \leq y$ and $l(x)=l(y)$ then the operation in line 9 and Lemma  \ref{lem:equality-isomorphism} guarantee $x'\equi y'$.

\item If $x<y$ and $l(x)<l(y)$ then the partition induced by $y'$ is minimum in $Part(l(y))$ (line 10); therefore, $x'<y'$.
\end{enumerate}

Thus in both cases the ordering of the nodes is maintained, i.e. $S_{T'}(x')$ is ordered. It remains to analyze the ordering of $S_{T'}(v')$.

Write $S_{T}(v)=\{ v_1 \leq \ldots \leq v_j \leq v_{j+1} \leq \ldots \leq v_{j+h}\}$, where $v_j=v$. Similarly, write $S_{T'}(v')=\{ v'_1 \leq \ldots \leq v'_j \leq v'_{j+1} \leq \ldots \leq v'_{j+h}\}$, where $v'_j=v'$.

As the algorithm changes only the nodes of $S^+_T(v)\cup \{v\}$, the ordering of $T$ guarantees that $S_{T'}^-(v')$ is ordered.

\medskip

\noindent \textbf{Claim:} $v'$ is correctly positioned in $S_{T'}(v')$.

\medskip

To prove the Claim we need to show that $v'_{j-1} \leq v'_j \leq v'_{j+1}$. By Theorem~\ref{thm:next-partition-correctness} and Case $4.1$ of Definition~\ref{def:nodes-inequality} we have $v'_j> v_j $. But by the ordering of $T$, $v_j\geq v_{j-1}\equi v'_{j-1}$ is valid, and thus $v'_j\geq v'_{j-1}$. On the other hand, if $l(v_{j+1})=l(v_j)$ then by line 9 we have $v'_{j+1}\equi v'_j$, otherwise $l(v_{j+1})>l(v_j)$ and from line 10 and Definition~\ref{def:nodes-inequality} we have $v'_{j+1}>v'_j$. Hence, the Claim follows.

\medskip

Following the same analysis used for $S_{T'}^+(x')$, we conclude that $S_{T'}^+(v')$ is also ordered, because both are handled by the operation in lines 9 and 10. This fact and the above Claim guarantee that $S_{T'}(v')$ is ordered. Hence, the lemma follows.
\end{proof}

\begin{theorem} \label{thm:next-tree-correctness}
Let $T\in\T_n$. If $\mathit{NextTree}(T)=\Null$ then $T$ is the maximum element in $\T_n$; otherwise,  $\mathit{NextTree}(T)$ returns the tree immediately next to $T$ in $\T_n$.
\end{theorem}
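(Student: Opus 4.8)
The plan is to split into the two assertions of the statement and treat the trivial direction first. If $\mathit{NextTree}(T)=\Null$, then by inspection of Algorithm~\ref{alg:immediately-next-tree} this happens exactly when $\mathit{FindPivot}(T)=\Null$, i.e.\ $T$ has no pivot node, and Lemma~\ref{lem:maximum-tree-no-pivot} then gives that $T$ is the maximum element of $\T_n$. So the substance is the converse: assuming $T$ is not maximal (so a pivot $v$ exists, again by Lemma~\ref{lem:maximum-tree-no-pivot}), show that the returned tree $T'$ is precisely the element immediately next to $T$ in the total order of Definition~\ref{def:ordering-trees}.

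First I would record that $T'\in\T_n$ (same number of leaves) and that $T'$ is ordered, which is exactly Lemma~\ref{lem:next-tree-keeps-ordering}; this is what lets us compare $T'$ with $T$ and with any competitor using Definition~\ref{def:nodes-inequality} applied to roots. Next I would prove $T'>T$: walking up the path from the pivot $v'$ to the root, at the pivot itself we replaced the induced partition by the one immediately next to it (Theorem~\ref{thm:next-partition-correctness}), so $v'>v$ by Case~3.1 of Definition~\ref{def:nodes-inequality}; and every node visited strictly before $v$ in the inverted post-order is exhausted, hence lies in some $S^+_T(x)$ for an ancestor $x$ of $v$ and gets rebuilt to its minimum configuration, which can only decrease it or leave it equivalent. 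Combining these along the root path (the decisive coordinate is the leftmost place where the induced partitions differ, namely the one governed by $v$), I get $\mathit{root}(T')>\mathit{root}(T)$, i.e.\ $T'>T$.

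The core step is minimality: for any ordered $U\in\T_n$ with $U>T$, I must show $U\geq T'$. I would argue by induction on $n$ (equivalently, descending along the root path), comparing $U$ and $T$ node by node as in Definition~\ref{def:nodes-inequality}. Since $U>T$, there is a highest node where the structures diverge; because every node of $T$ visited before the pivot in the inverted post-order is exhausted (its induced partition is already maximal in its $Part(\cdot)$), $U$ cannot strictly exceed $T$ at any such node, so the first point of divergence, read in the left-to-right comparison order, must be at or "above" the pivot $v$. At that node the induced partition of $U$ is $\geq$ the partition immediately next to the one at $v$ (by Theorem~\ref{thm:next-partition-correctness}, since it is a partition strictly greater than the one at $v$), which is exactly the partition installed at $v'$; if it is strictly greater we already get $U>T'$, and if it equals it then $U$ and $T'$ agree down to and including $v'$ and, on the $S^+$ side and on all ancestors' $S^+$ sides, $T'$ was set to the minimum possible configuration, so any legal $U$ is $\geq T'$ there as well, while on the $S^-$ side $U$ must match $T$ which equals $T'$. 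Assembling these comparisons along the root path yields $\mathit{root}(U)\geq\mathit{root}(T')$, hence $U\geq T'$.

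The main obstacle I anticipate is the bookkeeping in this last paragraph: making precise, using the inverted-post-order characterization of the pivot, the claim that "the first divergence between a larger tree $U$ and $T$ cannot occur strictly to the right of (i.e.\ earlier than) the pivot", and then threading the comparison through the nested siblinghoods $S^-_{T'}(x')$, $\{x'\}$, $S^+_{T'}(x')$ for each ancestor $x'$ of $v'$ simultaneously. This is essentially a careful unwinding of Definition~\ref{def:nodes-inequality}'s recursion together with Lemma~\ref{lem:next-tree-keeps-ordering}; I would state it as an auxiliary claim ("if $U$ is ordered and $U>T$ then the leftmost coordinate of disagreement between the partitions induced along corresponding nodes is at an ancestor of the pivot or at the pivot") and prove it by the same inverted-post-order walk used to define the pivot, after which minimality of $T'$ drops out.
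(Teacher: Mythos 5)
Your proposal is correct and follows essentially the same route as the paper's proof: dispose of the \Null{} case via Lemma~\ref{lem:maximum-tree-no-pivot}, invoke Lemma~\ref{lem:next-tree-keeps-ordering} to make $T'$ comparable, show $T'>T$ by propagating the pivot's strict increase up the root path via Case~3.2.b, and obtain minimality from the facts that nodes preceding the pivot in the inverted post-order are exhausted (so no competitor can diverge there) while the rebuilt $S^+$ parts are set to their minimum configurations. The only cosmetic difference is that you phrase minimality as ``any ordered $U>T$ satisfies $U\geq T'$'' whereas the paper argues level by level that $parent(x')$ is immediately next to $parent(x)$; these are the same argument under the definition of ``immediately next.''
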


\begin{proof}
If the procedure returns \Null, the result is guaranteed by Lemma~\ref{lem:maximum-tree-no-pivot}. Otherwise, let $T'$ be the output of the algorithm.

Since $T'$ is ordered, we use Lemma~\ref{lem:next-tree-keeps-ordering} to compare $T$ and $T'$. Let $v$ be the pivot of $T$, and $v'$ the node corresponding to $v$ in $T'$. As in the previous lemma, for each $x$ in $T$ we denote by $x'$ its corresponding node in $T'$.

\medskip

\noindent \textbf{Claim}: $T'>T$.

To prove the claim, first note that Theorem~\ref{thm:next-partition-correctness} and Definition~\ref{def:nodes-inequality} guarantee $v'>v$.

If $v=root(T)$ then it is clear that $T'>T$. Otherwise we write $S_{T}(v)=\{v_1\leq\ldots\leq v_k\leq\ldots\leq v_{h}\}$ and $S_{T'}(v')=\{v'_1\leq\ldots\leq v'_k\leq\ldots\leq v'_{h}\}$, where $v_k=v$ and $v'_k=v'$. Since the algorithm only makes changes to $S^+_T(v_k)$, we have $v_i\equi v'_i$ for $1 \leq i <k$. Hence, by Case 3.2.b in Definition~\ref{def:nodes-inequality}, $parent(v')>parent(v)$.

Let $x=parent(v)$. Write $S_T(x)=\{x_1\leq\ldots\leq x_j\leq\ldots\leq x_u\}$ and $S_{T'}(x')=\{x'_1\leq\ldots\leq x'_j\leq\ldots\leq x'_u\}$, where $x_j= x$ and $x'_j= x'$. Similarly, we have $x_i\equi x'_i$ for $1\leq i <j$. Then, since $x'>x$, by Case 3.2.b in Definition~\ref{def:nodes-inequality} it follows that $parent(x')>parent(x)$. In general, the same argument can be successively applied to each ancestor $x$ of $v$ to conclude that $x'>x$. If $x=root(T)$ then $T'>T$, and the claim follows.

\medskip

Now prove that $T'$ is immediately next to $T$ in $\T_n$.

Initially, the algorithm takes the pivot $v$ of $T$ and builds $v'$, whose induced partition by line 5 is immediately next to the partition induced by $v$. Let $x$ be such that $x=v$ or $x$ is an ancestor of $v$. Then, for each $y\in S_T(x)$, the algorithm builds $y'_i\in S_{T'}(x')$ as follows: if $y\in S_T^-(x)$ then $y'$ is equivalent to $x'$; if $y\in S_T^+(x)$ then $y$ is exhausted from the definition of pivot. Therefore, $y$ is built so that it is the minimum node, as detailed below and similarly to the proof of the previous lemma:

\begin{enumerate}
\setlength\itemsep{-1.0pt}
\item If $x\leq y$ and $l(x)=l(y)$ then $x'\equi y'$ (line 9).

\item If $x<y$ and $l(x)<l(y)$ then $y'$ is built from the minimum partition $Part(l(y))$; therefore, $x'<y'$ (line 10).
\end{enumerate}

Since $j=\min\{\;i \mid x_i\nequi x'_i\;\}$ and $x'_j> x_j $, it follows that $parent(x')>parent(x)$. By the cases above, for each $i\in\{j+1,\ldots,u\}$ we have that $x_i$ is exhausted in $T$. Moreover, $x'_i$ is built so that it is the minimum node with $l(x_i)$ leaves. Hence, $parent(x')$ is immediately next to $parent(x)$. As this argument applies to every ancestor $x$ of $v$ and to $v$ itself, it follows that $T'$ is immediately next to $T$ in $\T_n$.
\end{proof}

\begin{coro}\label{coro:next-tree-complexity}
Let $T\in\T_n$. Then algorithm $\mathit{NextTree}(T)$ runs in $O(n)$ time.
\end{coro}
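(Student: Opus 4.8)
The plan is to bound the running time of $\mathit{NextTree}(T)$ by adding up the costs of its constituent parts, using the complexity facts already established. First I would observe that line 2 invokes $\mathit{FindPivot}(T)$, which by Fact~\ref{fact:pivot-complexity} runs in $O(n)$ time. If the pivot $v$ is \Null, the procedure returns immediately, so assume $v\neq\Null$. Line 4 calls $\mathit{NextPartition}$ on the partition induced by $v$; by Corollary~\ref{coro:next-partition-complexity} this costs $O(l(v))=O(n)$. Line 5 calls $\mathit{RebuildNode}(v,\b_m)$, which by Fact~\ref{fact:rebuild-complexity} costs $O(l(v))=O(n)$.

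The nontrivial part is the \textbf{repeat} loop in lines 7--13, which walks from $x=v$ up through all ancestors of $v$ to the root. For a fixed $x$, the inner \textbf{for all} loop in lines 8--12 processes each $y\in S^+(x)$: either it copies the subtree $T(x)$ into $T(y)$ (line 9), costing $O(l(x))$ since $l(y)=l(x)$ in that branch; or it calls $\mathit{RebuildNode}(y,(1)_{l(y)})$ (line 10), costing $O(l(y))$ by Fact~\ref{fact:rebuild-complexity}. In either case the work done on $y$ is $O(l(y))$. Hence the total work at level $x$ is $O\!\left(l(x)+\sum_{y\in S^+(x)} l(y)\right)$, and since $S^-(x)\cup\{x\}\cup S^+(x)$ partitions the siblinghood $S(x)$, this is at most $O\!\left(\sum_{y\in S(x)} l(y)\right)=O(l(\mathit{parent}(x)))$ by Fact~\ref{fact:sum-of-children} (interpreting this as $O(n)$ when $x=\mathit{root}(T)$).

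The key step — and the one I expect to require the most care — is to show that summing these per-level costs over all ancestors of $v$ does not accumulate beyond $O(n)$. The naive bound would give $O(n)$ per level times $O(n)$ levels, i.e.\ $O(n^2)$, which is too weak. The right observation is that the nodes whose labels are summed at distinct levels are \emph{disjoint}: at level $x$ we charge $\sum_{y\in S^+(x)\cup\{x\}} l(y)$, i.e.\ we sum $l(y)$ over $y=x$ together with the siblings of $x$ that lie strictly to its right. As $x$ ranges over $v$ and its successive ancestors, the sets $S^+(x)\cup\{x\}$ are pairwise disjoint subsets of $V(T)$ (each node has a unique parent, so a node belongs to exactly one siblinghood, and within that siblinghood it lies in $S^+(x)\cup\{x\}$ for exactly one choice of $x$ on the pivot-to-root path). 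Therefore $\sum_{x} \sum_{y\in S^+(x)\cup\{x\}} l(y) = \sum_{u\in U} l(u)$ where $U\subseteq V(T)$, and the leaf-sets $\{$leaves of $T(u): u\in U\}$ are likewise pairwise disjoint, so this sum is at most the total number of leaves of $T$, namely $n$. (One also absorbs the $O(1)$ overhead per level from lines 8, 11--13, which is $O(n)$ total by Fact~\ref{fact:number-of-internal-nodes}.)

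Putting the pieces together: line 2 costs $O(n)$; lines 4--5 cost $O(n)$; the \textbf{repeat} loop costs $O(n)$ by the disjointness argument above; and all remaining bookkeeping is $O(n)$ by Fact~\ref{fact:number-of-internal-nodes}. Hence $\mathit{NextTree}(T)$ runs in $O(n)$ time, as claimed. I would present the disjointness argument carefully, perhaps with an explicit statement that a node of $T$ other than a node on the pivot-to-root path is touched by at most one iteration of the outer loop, since that is the crux that turns a quadratic-looking double sum into a linear bound.
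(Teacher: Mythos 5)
Your overall strategy---bounding the work per level of the \textbf{repeat} loop and then arguing that the naive $O(n)\times O(n)$ bound over-counts because the charged subtrees are disjoint---is the right one, and it is in spirit the same as the paper's proof, which packages the same observation as a telescoping recurrence: the cumulative cost through level $v_i$ is $O(l(v_{i+1}))$ because $l(v_i)+\sum_{y\in S^+(v_i)}l(y)\leq l(v_{i+1})$ by Fact~\ref{fact:sum-of-children}. However, the disjointness claim as you state it is false. Your set $U=\bigcup_{x\in P_v}\bigl(S^+(x)\cup\{x\}\bigr)$ contains every node $v_0,v_1,\ldots,v_k$ of the pivot-to-root path, and the subtrees $T(v_0)\subseteq T(v_1)\subseteq\cdots\subseteq T(v_k)=T$ are nested, so their leaf sets are anything but pairwise disjoint. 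Consequently $\sum_{u\in U}l(u)\geq l(v_k)+l(v_0)>n$ already whenever the pivot is not the root, and in a caterpillar-shaped tree (each internal node having one leaf child and one internal child) this sum grows to $\Theta(n^2)$. So the inequality $\sum_{u\in U}l(u)\leq n$ on which your argument rests does not hold.

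The error enters when you inflate the per-level cost from $O(1)+\sum_{y\in S^+(x)}O(l(y))$ to $O\bigl(l(x)+\sum_{y\in S^+(x)}l(y)\bigr)$: the algorithm performs no $\Theta(l(x))$ work on $x$ itself during the iteration for $x$ (only the pivot is rebuilt, once, in line 5), so the extra $l(x)$ term is not actual work, and it is precisely this term that destroys the disjointness. The repair is immediate: charge only the nodes of $\bigcup_{x\in P_v}S^+(x)$. For $i<j$, any $y\in S^+(v_i)$ satisfies $T(y)\subseteq T(v_{i+1})\subseteq T(v_j)$, while any $y'\in S^+(v_j)$ is a child of $v_{j+1}$ distinct from $v_j$, whence $T(y')\cap T(v_j)=\emptyset$; so the charged subtrees are pairwise disjoint, their leaf counts sum to at most $n$, and adding the $O(1)$ overhead per level over the at most $O(n)$ levels given by Fact~\ref{fact:number-of-internal-nodes} yields the claimed $O(n)$ bound. (Equivalently, you may adopt the paper's inductive formulation, in which the $l(v_i)$ term represents the \emph{accumulated} cost of all earlier iterations rather than a fresh charge, so nothing is counted twice.) With that correction your proof is complete and coincides with the paper's.
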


\begin{proof}
The worst case occurs when $T$ has a pivot other than $root(T)$.

Corollary~\ref{coro:next-partition-complexity} together with Facts~\ref{fact:pivot-complexity} and~\ref{fact:rebuild-complexity} ensure that all the operations outside the loop in lines 7--13 are done in $O(n)$ time.

The {\bf for} loop in lines 8--11 performs the operations in lines 10--11, for each $y\in S^+(x)$. Assume that $S(x)=\{x_1,\ldots,x_k\}$. The worst case complexity of lines 8--11 occurs when $x=x_1$ and $S^+(x)=\{x_2,\ldots,x_k\}$, and thus is given by $\sum_{i=2}^k O(l(x_i))=O(l(parent(x))$ (recall Fact~\ref{fact:sum-of-children}).

Let $P_v: (v_0,v_1,\ldots,v_{k-1},v_k)$ be the path from the pivot to the root of the tree, where $v_0=v$ and $v_k=root(T)$. The {\bf repeat} command in lines 7--13 executes the internal {\bf for} loop in lines 8--11 for each $x$ of $P_v$. Then, using the idea developed in the previous paragraph, we conclude that: (a) for $x=v_0$, the internal {\bf for} loop in lines 8--11 is executed for every $y\in S^+(v_0)$ in $O(l(v_{1}))$ total time, since $v_1=parent(v_0)$; (b) the runtime for $x=v_i$, $i\in\{1,\ldots,k-1\}$, is:

$$O(l(v_i))+\sum_{y\in S^+(v_i)}O(l(y))) = O(l(v_{i+1})),$$

\noindent where $O(l(v_i))$ is the complexity of the $i$ previous iterations. The process terminates when the last ancestor ($x=v_k$) is reached. Therefore, the overall time complexity of $\mathit{NextTree}(T)$ is $O(l(v_k))=O(n)$.
\end{proof}

\section{Cograph generation}\label{sec:cograph-generation}

As discussed in the beginning of the previous section, the generation of the elements in $\T_n$ can be done as follows: starting from the minimum tree $T_{\mathit{min}}$ in $\T_n$  (Fact \ref{fact:minimum-tree}), apply Algorithm \ref{alg:immediately-next-tree} to generate the tree immediately next to $T_{\mathit{min}}$ in $\T_n$, and successively repeat the application of the algorithm until reaching the maximum element of $\T_n$ (characterized by the absence of pivot). Since the entire generation procedure starts with an ordered tree (recall that $T_{\mathit{min}}$ is trivially ordered), Lemma~\ref{lem:next-tree-keeps-ordering} ensures that subsequent trees are all ordered as well, i.e., there is no need of extra work to order trees along the generation. In addition, by Corollary~\ref{coro:next-tree-complexity}, each new generated tree is determined in $O(n)$ time. Based on such arguments, we present below a formal description of our cograph generation algorithm.

Let $\C$ denote the family of all cographs, and consider the partition $\C=\bigcup_{n=1}^{\infty}\C_n$, where $\C_n=\{G\in\C\mid |V(G)|=n\}$. Below we establish a total order on the members of $\C_n$. First, note that any cotree $T$ with $n$ leaves can be viewed as a member (ordered tree) of $\T_n$, and therefore we can apply the total ordering in Definition~\ref{def:ordering-trees} to cotrees.

\begin{defi} \label{def:cograph-ordering}
Let $G_1,G_2\in\C_n$, and let $T_1,T_2$ be their respective ordered cotrees. Say that $G_1>G_2$ if $(i)$ $T_1>T_2$, or $(ii)$ $T_1\equi T_2$, $root(T_1)$ is type-1, and $root(T_2)$ is type-0. In addition, say that $G_1=G_2$ if $T_1\equi T_2$ and $root(T_1)$ and $root(T_2)$ are of the same type.
\end{defi}

From the above definition, it is easy to see that $G_1>G_2$ implies $G_1\not\equiv G_2$. In adition, $G_1=G_2$ if and only if $G_1\equiv G_2$.

For $T\in\T_n$, let $T^0$ (resp., $T^1$) be the cotree associated with $T$ by setting $root(T)$ as a type-0 (resp., type-1) node. Also, let $G_i^0$ and $G_i^1$ be the cographs associated with cotrees $T_i^0$ and $T_i^1$, respectively.

\begin{algorithm}[H]
\caption{$\mathit{CographGeneration}$}
\label{alg:cograph-generation}
\begin{algorithmic}[1]
\Require integer $n\geq 2$
\Ensure all cographs with $n$ vertices
\Procedure{$\mathit{CographGeneration}$}{$n$}
  \State $T_1\gets T_{\mathit{min}}$  \Comment{the minimum tree in $\T_n$}
  \State $i\gets 1$
  \Repeat
    \State Output $T_i^0, T_i^1$
    \State $T_{i+1} \gets \mathit{NextTree}(T_i)$
    \State $i\gets i+1$
  \Until{$T_i = \Null$}
 \EndProcedure
\end{algorithmic}
\end{algorithm}

\begin{theorem} \label{propo:corretude_GeradorCografos}
Let $n \geq 2$ be an integer. Then:

\noindent $(a)$ the sequence $T_1^0,T_1^1,T_2^0,T_2^1,\ldots$ generated by Algorithm~\ref{alg:cograph-generation} contains all the cotrees with $n$ leaves;

\noindent $(b)$ the associate sequence $S=G_1^0,G_1^1,G_2^0,G_2^1,\ldots$ contains all the cographs with $n$ vertices, where no two graphs in $S$ are isomorphic;

\noindent $(c)$ the delay of Algorithm~\ref{alg:cograph-generation} is $O(n)$;

\noindent $(d)$ the time spent by Algorithm~\ref{alg:cograph-generation} to output the first element is $O(n)$.

\end{theorem}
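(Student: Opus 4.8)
The plan is to verify the four claims by assembling results already established in Sections~\ref{sec:node-ordering} and~\ref{sec:next-generation}, treating each item in turn.

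For part~$(a)$, I would argue that Algorithm~\ref{alg:cograph-generation} enumerates exactly the elements of $\T_n$ in increasing order, and then invoke Proposition~\ref{propo:cographs-and-trees} to translate this into a statement about cotrees. Concretely: by Fact~\ref{fact:minimum-tree}, $T_1=T_{\mathit{min}}$ is the minimum of $\T_n$; by Lemma~\ref{lem:next-tree-keeps-ordering} every tree produced in the loop is ordered, so $\mathit{NextTree}$ is always applied to a legitimate ordered tree; by Theorem~\ref{thm:next-tree-correctness}, as long as $\mathit{NextTree}$ does not return \Null{} it yields the tree immediately next to $T_i$ in $\T_n$, and it returns \Null{} precisely when $T_i$ is the maximum element of $\T_n$. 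Since $\T_n$ is finite and totally ordered (Definition~\ref{def:ordering-trees}), a straightforward induction shows the generated sequence $T_1<T_2<\cdots$ is exactly the enumeration of all members of $\T_n$, terminating at the maximum. Each $T_i$ spawns the two cotrees $T_i^0,T_i^1$, and since every cotree with $n$ leaves arises from a unique ordered tree in $\T_n$ with one of the two root types, the output list contains all cotrees with $n$ leaves.

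For part~$(b)$, I would use Definition~\ref{def:cograph-ordering} and the remark immediately following it, namely that $G_1=G_2$ iff $G_1\equiv G_2$ and that $G_1>G_2$ implies $G_1\not\equiv G_2$. Since the $T_i$ are pairwise non-equivalent and, for each fixed $i$, $T_i^0$ and $T_i^1$ have roots of different types, the cographs $G_1^0,G_1^1,G_2^0,G_2^1,\ldots$ are pairwise distinct under $=$, hence pairwise non-isomorphic; together with part~$(a)$ (every cotree, hence every cograph on $n$ vertices, appears) this gives the claim. One small point to note is the degenerate possibility $T_i^0\equiv T_i^1$, which by the discussion at the start of Section~\ref{sec:node-ordering} happens only when the tree is the single root — impossible here since $n\ge 2$ forces every tree in $\T_n$ to have an internal root with at least two children — so the distinctness is genuine.

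For parts~$(c)$ and~$(d)$, the work is essentially done: one iteration of the loop outputs $T_i^0,T_i^1$ (each obtainable from $T_i$ in $O(n)$ time, since a cotree is just a labeled tree in $\T_n$ with a chosen root type, and $|V(T_i)|\le 2n-1$ by Fact~\ref{fact:number-of-internal-nodes}) and then computes $T_{i+1}=\mathit{NextTree}(T_i)$, which runs in $O(n)$ time by Corollary~\ref{coro:next-tree-complexity}; hence the delay between consecutive outputs is $O(n)$. For~$(d)$, building $T_{\mathit{min}}$ explicitly and labeling it takes $O(n)$ time (Facts~\ref{fact:sum-of-children} and~\ref{fact:number-of-internal-nodes}), so the first output appears after $O(n)$ time. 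I do not expect a genuine obstacle here; the only thing requiring a little care is making the bookkeeping in part~$(a)$ precise — i.e., phrasing the induction so that ``the loop has generated $T_1,\ldots,T_i$, the first $i$ elements of $\T_n$ in increasing order'' is the invariant, and checking that the termination test $T_i=\Null$ fires exactly once, right after the maximum of $\T_n$ has been output.
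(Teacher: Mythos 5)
Your proposal is correct and follows essentially the same route as the paper's proof: both parts assemble Theorem~\ref{thm:next-tree-correctness}, Lemma~\ref{lem:next-tree-keeps-ordering}, Lemma~\ref{lem:equality-isomorphism}, Definition~\ref{def:cograph-ordering}, Fact~\ref{fact:number-of-internal-nodes}, and Corollary~\ref{coro:next-tree-complexity} in the same way. Your extra remarks (the explicit induction invariant for~$(a)$ and the observation that $T_i^0\not\equiv T_i^1$ for $n\ge 2$) only make explicit details the paper leaves implicit.
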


\begin{proof}

\noindent (a) By Theorem~\ref{thm:next-tree-correctness}, the tree $T_{i+1}$ is immediately next to $T_i$ in $\T_n$. Since the algorithm starts with the minimum tree and stops with the maximum tree in $\T_n$, the sequence $T_1,T_2,\ldots$ determined by the algorithm contains all the elements of $\T_n$. Therefore, the sequence $T_1^0,T_1^1,T_2^0,T_2^1,\ldots$ generated by the algorithm contains all the cotrees with $n$ leaves.

\medskip

\noindent (b) By item (a), the associated sequence $S = G_1^0,G_1^1,G_2^0,G_2^1,\ldots$ of cographs contains all the members of $\C_n$. Now, note that the $T_i$'s are in increasing order according to Definition~\ref{def:ordering-trees}, and by Lemma~\ref{lem:equality-isomorphism} are pairwise nonisomorphic. Hence, $S$ is in {\em strictly} increasing order according to Definition~\ref{def:cograph-ordering}. This implies that no two cographs in $S$ are isomorphic.

\medskip

\noindent (c) By Fact~\ref{fact:number-of-internal-nodes}, each $T_i$ contains $O(n)$ nodes; hence, obtaining $T_i^0$ and $T_i^1$ from $T_i$ can be done in $O(n)$ time. In addition, by Corollary~\ref{coro:next-tree-complexity}, $\mathit{NextTree}(T_i)$ runs in $O(n)$ time. Therefore, Algorithm~\ref{alg:cograph-generation} has delay $O(n)$.

\medskip

\noindent (d) It is easy to see that $T_{\mathit{min}}$ can be determined in $O(n)$ time. This implies that the time spent to output $T_1^0$ is $O(n)$.
\end{proof}

\begin{coro}\label{coro}
Algorithm~\ref{alg:cograph-generation} determines a linear order on $\C_n$.
\end{coro}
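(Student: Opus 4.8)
The plan is to show that Algorithm~\ref{alg:cograph-generation} outputs the members of $\C_n$ in an order that is consistent with a total order, so that the output sequence itself defines a linear order on $\C_n$. The natural candidate is the order from Definition~\ref{def:cograph-ordering}, and the key fact to verify is that the output sequence $S=G_1^0,G_1^1,G_2^0,G_2^1,\ldots$ is \emph{strictly increasing} with respect to that order and lists every element of $\C_n$ exactly once.

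First I would recall from Theorem~\ref{propo:corretude_GeradorCografos}(a) and~(b) that the trees $T_1,T_2,\ldots$ enumerated by the algorithm are exactly the elements of $\T_n$, each appearing once, and that they occur in strictly increasing order under Definition~\ref{def:ordering-trees} (by Theorem~\ref{thm:next-tree-correctness}, each $T_{i+1}$ is immediately next to $T_i$). Next, I would argue that for each $i$ the pair $G_i^0,G_i^1$ is output in the correct internal order: by Definition~\ref{def:cograph-ordering}(ii), since $T_i^0$ and $T_i^1$ share the same underlying tree but have roots of type $0$ and $1$ respectively, we have $G_i^0<G_i^1$; and the algorithm indeed outputs $T_i^0$ before $T_i^1$ in line~5. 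Finally, I would verify the ``boundary'' comparison between consecutive blocks: whenever $i<j$, part~(i) of Definition~\ref{def:cograph-ordering} gives $G_j^0>G_i^1$ and $G_j^1>G_i^1$ because $T_j>T_i$ (here one also needs that when $T_i\equi T_j$ as abstract trees they are in fact the same $T$, which holds since the $T_i$ are distinct elements of $\T_n$, so strict tree inequality $T_j>T_i$ really does hold for $j>i$). Combining these three observations, $S$ is strictly increasing, hence injective, and by part~(b) it is also surjective onto $\C_n$; therefore the position of a cograph in $S$ induces a bijection $\C_n\to\{1,\ldots,2M_n'\}$ (where $M_n'=|\T_n|$), and pulling back the usual order on the integers yields a total order on $\C_n$, which coincides with Definition~\ref{def:cograph-ordering}.

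I do not expect a serious obstacle here; the statement is essentially a repackaging of Theorem~\ref{propo:corretude_GeradorCografos}(b). The one point that deserves a careful sentence is the trichotomy/consistency check: one must confirm that Definition~\ref{def:cograph-ordering} really is a total order on $\C_n$ (trichotomy follows from Fact~\ref{fact:trichotomy-nodes} applied to the roots of the cotrees, together with the type tie-breaker), and that the algorithm's output respects it without gaps or repetitions. Once that is in place, the corollary follows immediately, and I would close by remarking that this is the ``by-product'' linear ordering of unlabeled cographs promised in the abstract.

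\begin{proof}
By Theorem~\ref{propo:corretude_GeradorCografos}(b), the sequence $S=G_1^0,G_1^1,G_2^0,G_2^1,\ldots$ output by Algorithm~\ref{alg:cograph-generation} contains every cograph in $\C_n$, and is strictly increasing with respect to Definition~\ref{def:cograph-ordering}; in particular no element is repeated. Consequently, mapping each cograph to its position in $S$ is a bijection from $\C_n$ onto an initial segment of the positive integers, and the pullback of the natural order on the integers is a total order on $\C_n$ coinciding with Definition~\ref{def:cograph-ordering}. Hence Algorithm~\ref{alg:cograph-generation} determines a linear order on $\C_n$.
\end{proof}
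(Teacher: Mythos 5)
Your proof is correct and follows exactly the route the paper intends: the corollary is stated without proof as an immediate consequence of Theorem~\ref{propo:corretude_GeradorCografos}(b), which already asserts that the output sequence $S$ is strictly increasing under Definition~\ref{def:cograph-ordering} and exhausts $\C_n$ without repetition. Your additional checks (the internal order $G_i^0<G_i^1$ via the type tie-breaker, and the cross-block comparison via $T_j>T_i$) are exactly the details the paper leaves implicit.
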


Figure~\ref{fig:four-vertex-cographs} depicts in increasing order all cographs with $4$ vertices generated by $\mathit{CographGeneration}(4)$.

\begin{figure}[H]
\centering
\includegraphics[scale=0.7]{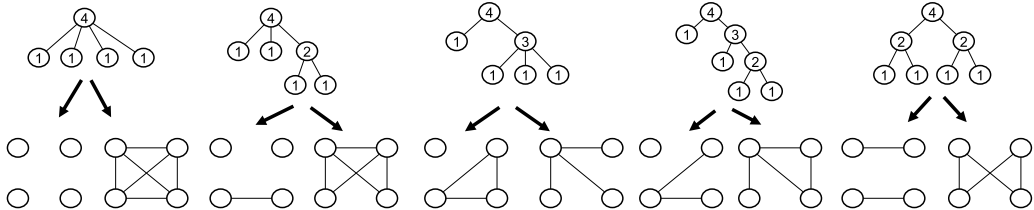}
\caption{Cographs with 4 vertices}
\label{fig:four-vertex-cographs}
\end{figure}

\section{Conclusions}

Based on the fact that a cotree is a compact, $O(n)$ size representation of an $n$-vertex cograph, in this work we described an algorithm for generating all unlabeled cographs with $n$ vertices, via the generation of cotrees. The delay of our algorithm is $O(n)$ and the time needed to generate the first output is also $O(n)$, yielding an overall $O(n\,M_n)$ time complexity, where $M_n=|\C_n|$. The algorithm avoids the generation of duplicates (isomorphic outputs) and produces, as a by-product, a total ordering of $\C_n$. To the best of the authors' knowledge, this is the first practical description of a linear-delay cograph generation algorithm.

The algorithm was implemented in language $C\#$ and executed on an AMD FX-6100 Six-Core Processor at 3.30GHz with 8GB of RAM,  running \textit{Windows 7} operating system. Table~\ref{tab:number-of-cographs} shows the values $|\C_n|$ calculated by the algorithm, for $n\leq 19$. We remark that the results in Table~\ref{tab:number-of-cographs} agree with the results presented in~\cite{hougardy}, for $n\leq 10$. In a future work, we will describe an application of our generation algorithm in the search of counter-examples for spectral graph theory conjectures on cographs.

\begin{table}[H]
\begin{small}
\begin{center}
\begin{tabular}{|c|c|c|c|c|c|c|c|c|c|c|c|c|}
\hline
$n$ & 2 & 3 & 4 & 5 & 6 & 7 & 8 & 9 & 10 & 11 & 12 & 13 \\
\hline
$|\C_n|$ & 2 & 4 & 10 & 24 & 66 & 180 & 522 & 1532 & 4624 &  14136 & 43930 & 137908 \\
\hline
\end{tabular}

\begin{tabular}{|c|c|c|c|c|c|c|}
$n$ & 14 & 15 & 16 & 17 & 18 & 19 \\
\hline
$|\C_n|$  & 437502 & 1399068 & 4507352 & 14611576 & 47633486 & 156047204\\
\hline
\end{tabular}
\end{center}
\end{small}
\caption{Number of cographs with $n\leq 19$.\label{tab:number-of-cographs}}
\end{table}

\bibliographystyle{elsarticle-num}

\end{document}
\endinput